%% file: main.tex
\documentclass[letterpaper,journal]{IEEEtran}
\input{macros.tex}
\usepackage{multibib}
\newcites{SM}{SM References}

\begin{document}
\title{Sheaf-theoretic Signal Processing on Graphs:\\Spectral Theory,  Filtering, and Sampling}
\author{Gabriele D'Acunto~\IEEEmembership{(Member,~IEEE~SPS)}, Leonardo Di Nino~\IEEEmembership{(Student Member,~IEEE)}, \\Paolo Di Lorenzo~\IEEEmembership{(Senior Member,~IEEE)}, Sergio Barbarossa~\IEEEmembership{(Life Fellow,~IEEE)}
\thanks{The Authors are with the Department of Information Engineering, Electronics, and Telecommunications, Sapienza University of Rome, 00184 Rome, Italy (e-mails: \{gabriele.dacunto, leonardo.dinino, paolo.dilorenzo, sergio.barbarossa\}@uniroma1.it).
All authors are also with the National Inter-University Consortium for Telecommunications (CNIT), Parma, Italy. The work was supported by the SNS JU project 6G-GOALS~\cite{strinati2024goal} under the EU’s Horizon program Grant Agreement No 101139232, and by Huawei Technology France SASU under Grant N. Tg20250616041.
} \vspace{-.6cm}
}


\maketitle
\input{sections/abstract}

\begin{IEEEkeywords}
Graph signal processing, sheaf-theoretic signal processing, sheaf Fourier transform, sheaf sampling.
\end{IEEEkeywords}

\input{sections/0-introduction}
\input{sections/1-primer-on-network-sheaf.tex}
\input{sections/2-structured-sheaf-signals.tex}
\input{sections/3-sheaf-spectral-theory}
\input{sections/4-sheaf_filter.tex}
\input{sections/5-sheaf-sampling}
\input{sections/6-applications.tex}
\input{sections/7-conclusions}

\balance
\bibliographystyle{IEEEtran}
\bibliography{bibliography}

\clearpage

\section*{Supplementary Material for\\ \enquote{Sheaf-theoretic Signal Processing on Graphs: Spectral Theory,  Filtering, and Sampling}}

\opensupplement

\input{apps/linear-algebra-with-metric-tensors.tex}
\input{apps/proofs.tex}
\input{apps/spectral-multiplicity}

\balance
\bibliographystyleSM{IEEEtran}
\bibliographySM{bibliographySM}

\closesupplement

\end{document}

%% file: macros.tex
\usepackage[english]{babel} 
\usepackage[expansion=false]{microtype}      
\usepackage{csquotes}       
\usepackage[normalem]{ulem} 
\usepackage{xspace}         
\usepackage{textcomp}       
\usepackage{enumitem}
\usepackage{cite}

\usepackage{type1cm}        
\usepackage{bold-extra}     
\usepackage{bm}             
\usepackage{bbm}            
\usepackage{bbold}          
\usepackage{siunitx}        
\usepackage{xfrac}          

\usepackage{amsmath,amsfonts,amssymb,mathtools} 
\usepackage{amsthm}
\usepackage{physics}        
\usepackage{thmtools,thm-restate} 

\usepackage{booktabs}       
\usepackage{multirow}       
\usepackage{array}          
\usepackage{rotating}

\usepackage{graphicx}       
\usepackage[style=base, tableposition=top]{caption} 
\usepackage[caption=false,font=normalsize,labelfont=sf,textfont=sf]{subfig} 
\usepackage{tikz}           
\usepackage{tikz-cd} \usetikzlibrary{arrows.meta}
\usetikzlibrary{patterns, positioning, arrows, bayesnet, calc} 

\usepackage{tikz-cd}        
\usepackage{pgfplots}       
\usepgfplotslibrary{patchplots}
\pgfplotsset{compat=1.15}   
\usepackage[most]{tcolorbox}      
\usepackage{wrapfig}
\usepackage{algorithm}
\usepackage{algorithmic}

\usepackage[bookmarks,colorlinks=true]{hyperref} 
\usepackage[capitalize]{cleveref} 

\usepackage{balance}        
\usepackage{hyphenat}       
\usepackage[show]{chato-notes} 
\usepackage{stfloats}       
\usepackage{verbatim}       
\usepackage{cuted}

\newcommand{\spara}[1]{\smallskip\noindent\textbf{#1}}

{\par\egroup\vskip 0.25ex}

\newenvironment{squishenum}
{\begin{enumerate}[label=\emph{\roman*}.,
  itemsep=0pt,
  parsep=3pt,
  topsep=3pt,
  partopsep=0pt,
  leftmargin=1.5em]}
{\end{enumerate}}

\usepackage[dvipsnames]{xcolor}
\hypersetup{
   colorlinks=true,
   linkcolor={red!50!black},
   filecolor={green!50!black},
   citecolor={green!50!black}, 
   urlcolor={blue!80!black},
}

\definecolor{mypurple}{RGB}{254, 68, 218}
\definecolor{myred}{HTML}{E13D66}
\definecolor{mycyan}{HTML}{70D7D0}
\definecolor{mylightblue}{HTML}{2274A5}
\definecolor{mydarkblue}{HTML}{0C0A3E}
\definecolor{mydarkviolet}{HTML}{7B2D8B}
\definecolor{myorange}{HTML}{F4A261}

\newcommand{\mylightblue}[1]{\textcolor{mylightblue}{#1}}

\newcommand{\mydarkviolet}[1]{\textcolor{mydarkviolet}{#1}}
\newcommand{\myorange}[1]{\textcolor{myorange}{#1}}

\theoremstyle{plain}
\newtheorem{theorem}{Theorem}[section]
\newtheorem{proposition}[theorem]{Proposition}

\newtheorem{corollary}[theorem]{Corollary}

\theoremstyle{remark}
\newtheorem{remark}{Remark}
\newtheorem{definition}{Definition}

\newtheorem{example}{Example}

\crefname{theorem}{Thm.}{Thms.}
\crefname{proposition}{Prop.}{Props.}
\crefname{lemma}{lem.}{lems.}
\crefname{corollary}{Cor.}{Cors.}
\crefname{definition}{Def.}{Defs.}
\crefname{section}{Sec.}{Secs.}
\crefname{figure}{Fig.}{Figs.}
\crefname{problem}{Prob.}{Probs.}
\crefname{appendix}{App.}{Apps.}
\crefname{equation}{Eq.}{Eqs.}
\crefname{algorithm}{Alg.}{Algs.}
\crefname{table}{Tab.}{Tabs.}
\crefname{example}{Ex.}{Exs.}
\crefname{remark}{Rem.}{Rems.}

\newcommand{\opensupplement}{
    
    \newcounter{SIsection}
    \renewcommand{\theSIsection}{Supplementary Section \arabic{section}}
    
    \crefalias{appendix}{supp_sec}
    \crefalias{figure}{supp_fig}
    \crefalias{subfigure}{supp_subfig}
    \crefalias{table}{supp_table}

    \setcounter{theorem}{0}
    \setcounter{equation}{0}
    \setcounter{figure}{0}
    \setcounter{table}{0}
    \setcounter{page}{1}
    \setcounter{section}{0}
    
    \renewcommand{\figurename}{Supplementary Figure}
    \renewcommand{\tablename}{Supplementary Table}

    \renewcommand{\thesection}{Supplementary Section \arabic{section}}
    \renewcommand{\theHsection}{Supplementary Section \arabic{section}}
    \renewcommand{\thepage}{\arabic{page}}
    
}

\newcommand{\closesupplement}{
    \renewcommand{\thepage}{\arabic{page}}
}

\DeclareMathOperator*{\argmin}{arg\,min}

\newcommand{\im}[1]{\ensuremath{\mathrm{Im}\left(#1\right)}\xspace}

\newcommand{\Hilb}{\ensuremath{\mathsf{Hilb}_{\reall}}\xspace}

\newcommand{\ns}{\ensuremath{F}\xspace}
\newcommand{\restrictionmap}[2]{\ensuremath{\mathbf{F}_{ #1 \trianglelefteq #2 }}\xspace}
\newcommand{\extensionmap}[2]{\ensuremath{\bar{\mathbf{F}}_{ #1 \trianglelefteq #2 }}\xspace}

\newcommand{\nscoeff}{\ensuremath{J}\xspace}
\newcommand{\restrictionmapscoeff}[2]{\ensuremath{\mathbf{J}_{#1 \trianglelefteq #2}}\xspace}
\newcommand{\nsrep}{\ensuremath{J}\xspace}

\newcommand{\pd}{\ensuremath{\mathcal{S}_{++}}\xspace}

\newcommand{\reall}{\ensuremath{\mathbb{R}}\xspace}

\newcommand{\starr}[1]{\ensuremath{\star\left({#1}\right)}\xspace}
\newcommand{\cochainspace}[3]{\ensuremath{\mathcal{C}^{#1}\left({#2}; {#3}\right)}\xspace}
\renewcommand{\cochainspace}[3]{\mathcal{C}^{#1}\left(#2;#3\right)}
\newcommand{\gsspace}[2]{\ensuremath{\Gamma\left({#1}; {#2}\right)}\xspace}

\newcommand{\edgeset}{\ensuremath{\mathcal{E}}\xspace}
\newcommand{\graph}{\ensuremath{\mathcal{G}}\xspace}
\newcommand{\subgraph}{\ensuremath{\mathcal{U}}\xspace}

\newcommand{\vertexset}{\ensuremath{\mathcal{N}}\xspace}

\newcommand{\Eprod}[3]{\ensuremath{\langle #1, \, #2 \rangle_{#3}}\xspace}
\renewcommand{\Eprod}[3]{\left\langle #1,\,#2\right\rangle_{#3}}

\newcommand{\stiefel}[2]{\ensuremath{\mathrm{St}({#1},{#2})}\xspace}

\newcommand{\specialO}[1]{\ensuremath{\mathrm{SO}(#1)}\xspace}

\newcommand{\zeros}{\ensuremath{\boldsymbol{0}}\xspace}

\newcommand{\vecu}{\ensuremath{\mathbf{u}}\xspace}
\newcommand{\vecv}{\ensuremath{\mathbf{v}}\xspace}
\newcommand{\vecw}{\ensuremath{\mathbf{w}}\xspace}
\newcommand{\vecs}{\ensuremath{\mathbf{s}}\xspace}
\newcommand{\vecy}{\ensuremath{\mathbf{y}}\xspace}

\newcommand{\x}{\ensuremath{\mathbf{x}}\xspace}
\newcommand{\y}{\ensuremath{\mathbf{y}}\xspace}
\newcommand{\w}{\ensuremath{\mathbf{w}}\xspace}

\newcommand{\m}{\ensuremath{\mathbf{m}}\xspace}

\newcommand{\A}{\ensuremath{\mathbf{A}}\xspace}
\newcommand{\B}{\ensuremath{\mathbf{B}}\xspace}

\newcommand{\D}{\ensuremath{\mathbf{D}}\xspace}
\newcommand{\G}{\ensuremath{\mathbf{G}}\xspace}
\newcommand{\Hmet}{\ensuremath{\mathbf{H}}\xspace}
\newcommand{\myH}{\ensuremath{\mathbf{H}}\xspace}
\newcommand{\identity}{\ensuremath{\mathbf{I}}\xspace}
\newcommand{\eye}[1]{\ensuremath{\identity_{#1}}\xspace}

\newcommand{\M}{\ensuremath{\mathbf{M}}\xspace}
\newcommand{\myO}{\ensuremath{\mathbf{O}}\xspace}

\newcommand{\T}{\ensuremath{\mathbf{T}}\xspace}
\newcommand{\U}{\ensuremath{\mathbf{U}}\xspace}
\newcommand{\V}{\ensuremath{\mathbf{V}}\xspace}

\newcommand{\bdO}{\ensuremath{\mathbb{O}}\xspace}

\renewcommand{\Lsh}{\mathbf{L}_{\ns}}
\newcommand{\Lshrep}{\mathbf{L}_{\nsrep}}

%% file: sections/abstract.tex

\begin{abstract}
Modern sensing, communication, and learning systems generate heterogeneous network signals, with local data differing in dimension, modality, and geometric structure. Processing such data requires a mathematical framework capable of simultaneously modeling heterogeneous local signal spaces and the transformations relating them. Network sheaves provide such a framework by associating local vector spaces with network entities and linear restriction maps with their interactions.

This is the first paper to develop a unified sheaf signal processing (SSP) framework on network sheaves, extending the fundamental operations of signal processing, namely spectral analysis, filtering, and sampling, to heterogeneous local spaces. Unlike graph and topological signal processing, where signals are modeled over a common vector space, SSP jointly models heterogeneous local signal spaces and the linear transformations relating neighboring spaces through restriction maps.

We define the Sheaf Fourier Transform (SFT), whose frequencies quantify signal inconsistency induced by the network topology, the restriction maps, and the local geometry. Building on this representation, we develop polynomial sheaf filters and formulate sampling as the joint selection of network nodes and intra-node components. We derive perfect recovery conditions for bandlimited sheaf signals and propose a greedy sampling-set design algorithm. To incorporate application-dependent signal models, including different bases, dictionaries, and learned embeddings, we introduce representation sheaves and characterize the natural transformations that preserve spectral properties and guarantee interoperability across representations. Experiments on synthetic, motion-capture, and financial datasets validate the proposed framework and demonstrate consistent improvements over canonical graph signal processing baselines.
\end{abstract}

%% file: sections/0-introduction.tex

\section{Introduction}\label{sec:introduction}

In recent years, Graph Signal Processing (GSP) has emerged as a powerful extension of classical SP to irregular domains, where signals are supported on the vertices of a graph rather than on regular Euclidean grids~\cite{shuman2013emerging,ortega2018graph}. 
By encoding the connectivity of the domain into graph shift operators, such as the adjacency matrix or the graph Laplacian, GSP provides a principled framework to define graph Fourier transforms (GFTs), localized filters, sampling strategies, and learning methods for networked data~\cite{sandryhaila2013discrete,shuman2013emerging,ortega2018graph,tsitsvero2016signals,mateos2019connecting,sardellitti2017graph,di2018adaptive}. 
The success of GSP stems from the fact that the processing operators are intrinsically adapted to the relational structure of the data. 
However, graph models are inherently restricted to pairwise interactions and cannot directly represent relations involving multiple entities. 
This limitation has motivated the extension of GSP to higher-order relational domains, giving rise to Topological Signal Processing (TSP) over simplicial or cell complexes~\cite{barbarossa2020topological,schaub2021signal,yang2022simplicial,battiloro2024generalized,grimaldi2026topological}.
Here, in addition to nodes, signals are associated to edges, polygons, and higher-dimensional cells. 
Building on this framework, recent work has further developed the relation between TSP and learning over simplicial and cell complexes~\cite{isufi2025topological}.

However, both GSP and TSP retain the fundamental assumption that signals attached to comparable cells belong to a common vector space, which is unrealistic in heterogeneous networks. 
Consider, for example, a common physical phenomenon observed by a multimodal sensing apparatus, composed of a microphone, an accelerometer, and a camera. 
Although the underlying context is shared, the individual signal spaces differ in dimension, coordinates, units, and semantics. 
Consequently, these signals cannot be directly compared or modeled through square matrix-valued weights alone.
Instead, the signal model must explicitly encode how individual signals are transported, observed, projected, or constrained across heterogeneous spaces. 
This calls for a framework in which the geometry of local signal spaces and the maps between them are intrinsic to the signal model itself.

Category theory~\cite{mac1971categories} offers the natural formal language for this problem: 
it is axiomatically centered not on mathematical structures in isolation, but on the relations, or \emph{morphisms}, that hold between them, together with the rules for composing such relations in a consistent manner. 
Network sheaves instantiate this language over a graph~\cite{curry2014sheaves}.
A network sheaf assigns a local data space, called a \emph{stalk}, to each node and edge of a graph, together with \textit{restriction maps} associated with node--edge incidences. 
These maps specify how data from distinct local spaces are transported on a common edge space. 
Network sheaves do not simply enrich the relational domain, as TSP does; they modify the signal model itself  incorporating linear relations among stalks. 
A sheaf signal is simply an assignment of values to the stalks. 
This perspective has recently attracted growing interest in spectral theory, dynamical systems, learning, and neural architectures \cite{ayzenberg2025sheaf}. 
However, its development as a unified SP framework including principled notions of spectral representation, filtering, and sampling for heterogeneous stalk-valued signals remains underinvestigated.

\spara{Related work.} 
Most GSP works restrict signals to scalar or homogeneous vector spaces and encode inter-node relationships through scalar edge weights \cite{ortega2018graph}. 
Several extensions consider vector-valued signals. 
Time-varying graph signals are naturally modeled within time-vertex SP, which jointly exploits graph and temporal structures through product-domain harmonic analysis and filtering \cite{Grassi2018TimeVertex}. 
More generally, multiway GSP and multidimensional graph forecasting extend GSP to tensor-valued data and vector-valued time series by leveraging product-graph constructions across multiple domains \cite{Stanley2020Multiway,Natali2020Forecasting}.
Vector-valued graph trend filtering has also been developed using group-sparsity and mixed-norm regularization \cite{Varma2020VectorValuedGTF}. 
A related direction considers matrix-weighted graphs, where matrix-valued edge weights define Laplacian-like operators acting on vector-valued node states \cite{Trinh2018MatrixWeightedConsensus}. 
Despite their broader modeling capabilities, these approaches still assume that signals belong to a common coordinate space and interact through prescribed linear operators.

The groundwork for applying sheaf-theoretic concepts to SP was initially suggested by Robinson, who recognized that several classical SP problems, such as filtering, sampling, and sensor integration, admit sheaf-theoretic reformulations based on sheaf-morphisms and cohomology~\cite{robinson2012asynchronous,robinson2013understanding,robinson2015sheaf,robinson2014topological,robinson2017sheaves}.
In contrast, our work adopts a model-based SP perspective, generalizing GSP and TSP, in which the network sheaf is regarded as the underlying signal model. 
We exploit this model to systematically develop the fundamental SP tools, i.e., signal representation, spectral analysis, filtering, and sampling, from the sheaf structure. 
In addition, unlike previous works that fix the signal sheaf to a single representation level, our formalism introduces and explicitly builds a \emph{representation sheaf} capturing the most appropriate low-dimensional local representation of the signal, enabling representation-aware SP while remaining fully consistent with the signal sheaf.

A complementary line of work has investigated sheaves as mathematical models for heterogeneous relational data. 
The spectral foundations of cellular sheaves were established by generalizing key notions of spectral graph theory to the sheaf Laplacian~\cite{hansen2019toward}. 
This framework has since been applied to opinion dynamics, sheaf neural networks, and geometric deep learning, where sheaf-based diffusion and convolutions support heterogeneous and asymmetric relations~\cite{hansen2021opinion,hansen2020sheaf,bodnar2022neural,battiloro2024tangent}. 
More recently, these ideas have been extended to infinite-dimensional stalks through Hilbert bundles~\cite{tandon2026consistent}, and have found applications in optimization, semantic communications, and causal abstraction~\cite{ghrist2022cellular,grimaldi2025learning,dacunto2026networkscausalabstractionssheaftheoretic,dacunto2026learningconsistentcausalabstraction}. 

Collectively, these two lines of works demonstrate both the applicability of sheaf theory to concrete signal-processing case studies and the expressive power of sheaves for representing heterogeneous local information and structured relations. 
Nevertheless, neither provides a unified SP framework encompassing signal representation, spectral analysis, filtering, and sampling for heterogeneous stalk-valued signals.

\spara{Contributions.}
To the best of our knowledge, this paper introduces the first unified framework for \emph{sheaf-theoretic signal processing} on networks with heterogeneous local signal spaces. 
Our main contributions are summarized as follows:
\begin{enumerate}
    \item \textit{Signal-model-aware sheaf representations.}
    We start from a network {\it signal sheaf}, describing raw data and their relations, and build a lower dimensional {\it representation sheaf} accommodating local orthogonal bases, dictionaries, latent embeddings, and task-specific feature maps. 
    We show that the consistent mappings between representation and signal sheaves are category-theoretic {\it natural transformations}, and derive the conditions under which they preserve the sheaf structure. 
    This enables SP to be carried out directly in low-dimensional, application-adapted representation spaces rather than on raw observations. 
    Furthermore, we equip representation stalks with metric tensors induced from the raw-signal geometry, providing a unified formulation for heterogeneous signal spaces with different dimensions, physical units, and scales.
    
    \item \textit{Sheaf spectral representation and filtering.}
    Building on the representation sheaf, we develop a spectral framework for heterogeneous sheaf signals. 
    We define the Sheaf Fourier Transform (SFT) from the eigendecomposition of the representation sheaf Laplacian and show that its frequencies jointly capture the network topology, the local geometry, and the transport constraints encoded by the restriction maps. 
    We establish the spectral intertwining between the representation and signal sheaf Laplacians and show that the intrinsic spectral objects are basis-invariant eigenspaces rather than individual eigenvectors. 
    These results naturally lead to polynomial sheaf filters and guarantee interoperability across sheaf representations.

    \item \textit{Sampling and recovery of bandlimited sheaf signals.}
    We introduce a sampling operator that jointly selects network nodes and intra-stalk components. 
    We derive necessary and sufficient conditions for perfect recovery of bandlimited sheaf signals and show that the resulting theory recovers graph-signal sampling as a special case. 
    Importantly, the spectral intertwining induces an orthogonal decomposition of the band into two components, one aligned with the local models, and one orthogonal to it. This offers a more parsimonious and computationally cheaper recovery strategy, by targeting only the part of the band aligned with the signal models. 
    Finally, we formulate sampling-set design as a rank-maximization problem and propose a greedy algorithm with an energy-based tie-breaking criterion.
\end{enumerate}
Numerical experiments on synthetic and real datasets assess the performance of the proposed framework and demonstrate consistent improvements over canonical GSP methods.

%% file: sections/1-primer-on-network-sheaf.tex

\section{A primer on sheaves over graphs}\label{sec:primer}
This section provides a selective overview of the mathematical framework of network sheaves which is key to our work.
The discussion is kept at the general level.
\Cref{sec:structured_signals,sec:Spectral theory over a network sheaf} will specialize the tools for the SP domain. 

\spara{Category theory as the language of relations.}
As anticipated in Section~\ref{sec:introduction}, sheaf-theoretic signal processing (SSP) adopts a relational viewpoint, where the emphasis is placed on the relationships among local mathematical structures rather than on the structures themselves. 
This viewpoint naturally leads to \emph{category theory}, a mathematical framework describing structured objects and the transformations between them \cite{mac1971categories}.

A \emph{category} consists of a collection of objects together with structure-preserving transformations, called \emph{morphisms}, between them. 
Typical examples include the category $\mathbf{Set}$ of sets and functions over sets, the category $\mathbf{Graph}$ of graphs and graph homomorphisms, and the category $\mathbf{Hilb}$ of Hilbert spaces and linear operators. 
Throughout this paper we shall work with the latter, although the concepts introduced below apply to arbitrary target categories.

A graph can itself be viewed as a category, namely a partially-ordered set (poset), whose objects are nodes and edges, and whose morphisms encode the node-edge incidence relations. 
A \emph{functor} is defined as a mapping between categories. 
A \emph{network sheaf} is a functor assigning to every object of the poset an object of a target category, and to every incidence relation a corresponding morphism.
The objects associated with the nodes and edges are called \emph{stalks}, while the associated morphisms are the \emph{restriction maps}. 
In the present work, every stalk is a finite-dimensional Hilbert space and every restriction map is a linear operator encoded as a matrix.

Besides assigning local mathematical structures to the graph, a network sheaf possesses a fundamental gluing property: 
local data that are compatible through the restriction maps uniquely determine a global section over the network, and every global section restricts consistently to the corresponding local data. 

A \emph{natural transformation} is a way of mapping one functor into another functor while completely preserving all structural relationships. 
In this work, we will show how to build a natural transformation from the {\it signal sheaf}, where the raw signals live, and the {\it representation sheaf}, a structure that exploits all possible lower-dimensional representations on each node/edge. 
We will show how the natural transformation allows us to transport fundamental SP operations, like spectral analysis and filtering, from one sheaf to the other, with substantial savings in terms of complexity.

\spara{Network sheaves on graphs valued in \Hilb.}
Let $\graph = (\vertexset, \edgeset)$ be an undirected graph with node set $\vertexset$ and edge set $\edgeset$, where $|\vertexset|=N$ and $|\edgeset|=E$, respectively. 
A \emph{network sheaf} $\ns$ on $\graph$ valued in $\Hilb$ consists of:
\begin{squishenum}
    \item a finite-dimensional real Hilbert space $\ns(i)$ for each node $i \in \vertexset$, called the \emph{node stalk} at $i$;
    \item a finite-dimensional real Hilbert space $\ns(e)$ for each edge $e \in \edgeset$, called the \emph{edge stalk} at $e$;
    \item a bounded linear map $\restrictionmap{i}{e} : \ns(i) \rightarrow \ns(e)$ for each node-edge incident pair $i \trianglelefteq e$, called the \emph{restriction map}.
\end{squishenum}
In the finite-dimensional setting, stalks are Euclidean spaces $\ns(i) \cong \reall^{d_i}$ and $\ns(e) \cong \reall^{d_e}$, and the restriction maps are matrices $\restrictionmap{i}{e} \in \reall^{d_e \times d_i}$. 
Each stalk carries an inner product induced by a \emph{metric tensor} $\G_i \in \pd^{d_i}$ at node $i$ and $\G_e \in \pd^{d_e}$ at edge $e$, so that $\Eprod{\x_i}{\mathbf{y}_i}{\ns(i)} = \x_i^\top \G_i \mathbf{y}_i$, with induced norm $\|\x_i\|_{\G_i}$ (cf. Supp.~S1-A).
The local inner products and the restriction maps play complementary roles: 
the former define the intrinsic geometry of each stalk, whereas the latter specify how data attached to adjacent nodes are compared through their common edge space, in its intrinsic geometry. Thus, heterogeneous stalks with different units, scales, or noise levels are handled without any ad hoc rescaling.  

The dimensions $d_i$ and $d_e$ are in general independent. Usually $d_e \geq \max(d_i, d_j)$, for an edge $e=(i,j)$. 
In this case, the edge stalk acts as an \emph{integration space}: 
the restriction maps embed the node vectors into a richer common space, and consistency demands a  global agreement between $\x_i$ and $\x_j$. 
However, we can also have $d_e \leq \min(d_i, d_j)$. In this case, the edge stalk acts as a \emph{compression bottleneck} and consistency requires that the vectors $\x_i$ and $\x_j$ agree on a common low-dimensional projection, allowing partial agreement while tolerating discrepancies in the complementary directions. 

GSP can be seen as the special case $\ns(i) \cong \reall^d$, $\ns(e) \cong \reall^d$, and $\restrictionmap{i}{e} = \eye{d}$, with typically $d=1$ and $\G_e = w_e \eye{d}$ for all $i \trianglelefteq e$, where $w_e \in \reall_{+}$ is the edge weight.

\spara{Cochains, sections and consistency.}
The space of \emph{node vectors} ($0$-cochains) is the direct sum of the node stalks, $C^0(\graph; \ns) = \bigoplus_{i \in \vertexset} \ns(i)$, and a node vector is $\x = [\x_1^\top,\ldots, \x_i^\top, \ldots,\x_N^ \top]^\top \in \reall^{D}$ with $D=\sum_i d_i$, i.e. the concatenation of all node stalk vectors. 
A \emph{section} of $\ns$ over a subgraph $\subgraph \subseteq \graph$ is a node vector that is \emph{locally consistent} on every edge internal to \subgraph:
\begin{equation}\label{eq:local_consistency}
    \restrictionmap{i}{e} \x_i = \restrictionmap{j}{e} \x_j\,, \qquad \forall \, e \in \mathcal{U}\,.
\end{equation}
A \emph{global section} is an assignment of a signal to every stalk of the sheaf such that all restriction maps are simultaneously satisfied. 
The space of global sections is denoted $\gsspace{\graph}{\ns}$. 
The hierarchy is therefore: 
$0$-cochains (no constraint) $\supset$ sections over \subgraph (local consistency) $\supset$ global sections (full consistency).

Consistency is a \emph{structural} condition, not a regularity one. 
In GSP, smoothness quantifies how much a vector varies across edges. 
Here, local consistency means that the views of the vector at adjacent nodes are \emph{exactly compatible} through the restriction maps. 
A global section lies in the kernel of the sheaf Laplacian, as we show next.

The space of \emph{edge vectors} ($1$-cochains) is $C^1(\graph; \ns) = \bigoplus_{e \in \edgeset} \ns(e)$, with elements $\x=[\x_{e_1}^\top,\ldots,\x_{e_E}^\top]^\top \in \reall^{\sum_e d_e}$.
Edge vectors generalize the 1-cochains of TSP: 
TSP corresponds to $\ns(e) = \reall$ with trivial restriction maps, while here each edge carries a vector in its own space $\ns(e)$. 
The metric tensors assemble into block-diagonal matrices $\mathbb{G}_0 \in \pd^{\sum_i d_i}$ and $\mathbb{G}_1 \in \pd^{\sum_e d_e}$, inducing inner products on $C^0$ and $C^1$.

\spara{Total variation and the sheaf Laplacian.}
As with GSP, a fundamental descriptor of a graph is its incidence matrix. 
Given an orientation $e=(i,j)$ (head $i$, tail $j$), the \emph{coboundary operator on a sheaf} $\B : C^0(\graph;\ns)  \rightarrow C^1(\graph;\ns)$ maps a node vector to an edge vector that measures the local inconsistency:
\begin{equation}\label{eq:coboundary}
    (\B\x)_{e} = \restrictionmap{i}{e}\x_i - \restrictionmap{j}{e}\x_j\,.
\end{equation}
\B is a block matrix with blocks
\begin{equation}\label{eq:coboundary_row}
    \B(e,i) \coloneq \begin{cases}
        \restrictionmap{i}{e} & \text{if $i$ is head,}\\
        -\restrictionmap{i}{e} & \text{if $i$ is tail,}\\
        \zeros_{d_e \times d_i} & \text{otherwise.}
    \end{cases}
\end{equation}

The \emph{total variation} of a vector $\x$ is defined as:
\begin{equation}\label{eq:total_variation}
    \mathrm{TV}(\x) = \|\B\x\|_{\mathbb{G}_1}^2 = \sum_{e \in \edgeset} \|\restrictionmap{i}{e}\x_i - \restrictionmap{j}{e}\x_j\|_{\G_{e}}^2\,;
\end{equation}
which reduces to $\sum_{e\in\edgeset}w_{e} (x_i - x_j)^2$ in the GSP case. 
Introducing the adjoint of $\B$ as $\B^\star = (\mathbb{G}_0)^{-1}\B^\top\mathbb{G}_1$ (cf. Supp.~S1-B), the \emph{sheaf Laplacian} is
\begin{equation}\label{eq:sheaf_laplacian}
    \mathbf{L}_\ns = \B^\star \B\,,
\end{equation}
so that $\mathrm{TV}(\x) = \Eprod{\x}{\mathbf{L}_\ns \x}{C^0}$. 
The operator $\mathbf{L}_\ns$ is symmetric positive semidefinite on $C^0(\graph;\ns)$ by construction (cf. Supp.~S1-B), and $\ker(\mathbf{L}_\ns) = \gsspace{\graph}{\ns}$.
The sheaf Laplacian depends on both the restriction maps and the metric tensors $\G_e$ through the adjoint $\B^\star$. 
Since $\ker(\mathbf{L}_\ns) = \ker(\B)$, the kernel depends only on the restriction maps and is unaffected by the choice of the metric. 
The metric tensors instead shape the non-zero spectrum: 
with a general anisotropic $\G_e$, the non-zero eigenvectors depend on the interplay between the restriction maps and the metric.
The dimension of the kernel, and hence the number of zero eigenvalues, depends jointly on the graph topology and the restriction maps.

\begin{remark}[Block structure of the sheaf Laplacian]
    The sheaf Laplacian $\mathbf{L}_\ns \in \reall^{D \times D}$ has a block structure inherited from the direct sum decomposition of $C^0(\graph;\ns)$. 
    The $(i,j)$-th block of size $d_i \times d_j$ is
    \begin{equation}
        [\mathbf{L}_\ns]_{ij} = \begin{cases}
            \G_i^{-1} \sum_{e \in \starr{i}} \restrictionmap{i}{e}^\top \G_e \, 
            \restrictionmap{i}{e} & \text{if } i = j, \\
            -\G_i^{-1} \restrictionmap{i}{e}^\top \G_{e} \, 
            \restrictionmap{j}{e} & \text{if } e \in \edgeset, \\
            \zeros_{d_i \times d_j} & \text{otherwise;}
        \end{cases}
    \end{equation}
    where $\starr{i} = \{e \in \edgeset \mid i \trianglelefteq e\}$ denotes the set of edges incident to $i$.
    The diagonal blocks are positive semidefinite and encode the local geometry and the restriction maps at each node; the off-diagonal blocks encode the coupling between adjacent nodes through the edge metric. 
    In the scalar GSP case, $d_i = 1$, $\restrictionmap{i}{e} = 1$, and $\G_e = w_e$, so the off-diagonal block reduces to $-w_{e}$ and $\mathbf{L}_\ns$ recovers the weighted combinatorial Laplacian.
\end{remark}

Note that $\mathbf{L}_\ns$ is \emph{not} symmetric in the standard sense.
Rather, $\mathbf{L}_\ns$ is $\mathbb{G}_0$-\emph{selfadjoint} (cf. Supp.~S1-B), i.e., 
$\Eprod{\x}{\mathbf{L}_\ns\mathbf{y}}{C^0} = \Eprod{\mathbf{L}_\ns\x}{\mathbf{y}}{C^0}$ for all $\x,\mathbf{y}\in C^0(\graph;\ns)$.
The symmetrized version $\widetilde{\mathbf{L}}_\ns = \mathbb{G}_0^{1/2}\mathbf{L}_\ns\mathbb{G}_0^{-1/2}$ is symmetric in the standard sense and shares the same spectrum as $\mathbf{L}_\ns$.

\begin{remark}[Unweighted case]
    When $\mathbb{G}_0 = \eye{}$ and $\mathbb{G}_1 = \eye{}$, we have $\B^\star = \B^\top$ and $\mathbf{L}_\ns = \B^\top\B$.
\end{remark}

\begin{remark}[Duality: cosheaves and extension maps]
    Every network sheaf $\ns$ has a dual structure, the \emph{network cosheaf}, obtained by reversing the direction of the restriction maps: 
    the \emph{extension map} on an incident pair $i \trianglelefteq e$ is the adjoint $\extensionmap{i}{e} = \restrictionmap{i}{e}^\top : \ns(e) \rightarrow \ns(i)$. 
    The cosheaf governs the boundary operator $\B^\star$ and the Laplacian on edge vectors. 
    Refer to~\cite{curry2014sheaves} for a comprehensive discussion.
\end{remark}

%% file: sections/2-structured-sheaf-signals.tex

\section{From Signal Sheaf to Representation Sheaf via Natural Transformations}
\label{sec:structured_signals}
When the stalks of the network sheaf \ns in \cref{sec:primer} correspond to observed signal spaces, and the restriction maps correspond to transport operators among these spaces, we term \ns a \emph{signal sheaf}.
In this work, we assume that \ns is either given or constructed from prior knowledge, leaving its data-driven estimation for future work.
In many SP applications, the vectors associated with the node stalks admit parsimonious representations over suitable dictionaries. 
Rather than processing the original high-dimensional signals, it is therefore desirable to operate directly on their low-dimensional representation vectors. 
This naturally raises the following question: 
\emph{under which conditions does a collection of local dictionary representations induce a consistent low-dimensional network sheaf?}

Specifically, let the signal attached to node $i$ admit the representation
\begin{equation}
\label{eq:dictionary_representation}
\x_i=\D_i\vecs_i,
\qquad
\vecs_i\in\reall^{c_i},
\qquad
c_i\le d_i,
\end{equation}
where $\D_i\in\reall^{d_i\times c_i}$ is a dictionary associated with node $i$. 
Without loss of generality, we assume that the dictionaries belong to the Stiefel manifold,
\begin{equation}
\label{eq:stiefel_manifold}
\stiefel{n}{m}
=
\left\{
\V\in\reall^{n\times m}
\,\middle|\,
\V^\top\V=\eye{m},
\;
m<n
\right\}.
\end{equation}

The representation vector $\vecs = [\vecs_1^\top, \ldots,\vecs_N^\top]^\top \in \reall^{C}$ with $C=\sum_i c_i$, naturally define the node stalks of a \textit{representation sheaf}, denoted by $\nscoeff$, with
\begin{equation}
    \nscoeff(i)\cong\reall^{c_i}\,, \qquad i\in\vertexset.    
\end{equation}
Moreover, the metric tensor $\G_i$ of the original stalk induces the metric
\begin{equation}
\label{eq:induced_metric_node}
\Hmet_i = \D_i^\top\G_i\D_i \in\pd^{c_i}\,,
\end{equation}
and similarly, for each edge stalk,
\begin{equation}\label{eq:induced_metric_edge}
    \Hmet_e = \D_e^\top\G_e\D_e\,,    
\end{equation}
where $\D_e$ denotes the dictionary over the edge $e$.
These local metrics assemble into the block-diagonal matrices $\mathbb{H}_0$ and $\mathbb{H}_1$, which replace $\mathbb{G}_0$ and $\mathbb{G}_1$ in the representation sheaf.

\begin{figure}[!t]
\centering
\resizebox{\columnwidth}{!}{%
\begin{tikzpicture}
\node[draw, circle, minimum size=1.0cm, inner sep=1pt,
      font=\large] (ni) at (0, 0) {$i$};

\node[draw, circle, minimum size=1.0cm, inner sep=1pt,
      font=\large] (nj) at (7.5, 0) {$j$};

\node[font=\large] (elbl) at (3.75, 0.30) {$e$};
\draw[thick] (3.25, 0) -- (4.25, 0);
\coordinate (etop) at (3.75, 0);

\draw[->, shorten >=20pt, shorten <=5pt]
    (ni.east) -- node[above, font=\large] {$i \trianglelefteq e$} (etop);
\draw[->, shorten >=20pt, shorten <=5pt]
    (nj.west) -- node[above, font=\large] {$j \trianglelefteq e$} (etop);

\node[font=\large, text=mylightblue] (Ji) at (0,    -2.2) {$\nscoeff(i)$};
\node[font=\large, text=mylightblue] (Je) at (3.75, -2.2) {$\nscoeff(e)$};
\node[font=\large, text=mylightblue] (Jj) at (7.5,  -2.2) {$\nscoeff(j)$};

\draw[->]
    (Ji) -- node[above, font=\large, text=mylightblue]
    {$\restrictionmapscoeff{i}{e}$} (Je);
\draw[->]
    (Jj) -- node[above, font=\large, text=mylightblue]
    {$\restrictionmapscoeff{j}{e}$} (Je);

\node[font=\large, text=mydarkviolet] (Fi) at (0,    -4.5) {$\ns(i)$};
\node[font=\large, text=mydarkviolet] (Fe) at (3.75, -4.5) {$\ns(e)$};
\node[font=\large, text=mydarkviolet] (Fj) at (7.5,  -4.5) {$\ns(j)$};

\draw[->]
    (Fi) -- node[below, font=\large, text=mydarkviolet]
    {$\restrictionmap{i}{e}$} (Fe);
\draw[->]
    (Fj) -- node[below, font=\large, text=mydarkviolet]
    {$\restrictionmap{j}{e}$} (Fe);

\draw[->, myorange] (Ji) -- node[left,  font=\large, text=myorange] {$\D_i$} (Fi);
\draw[->, myorange] (Je) -- node[right, font=\large, text=myorange] {$\D_e$} (Fe);
\draw[->, myorange] (Jj) -- node[right, font=\large, text=myorange] {$\D_j$} (Fj);

\draw[->, dashed, mylightblue]
    (ni.south) to[out=-110, in=110] (Ji.north);
\draw[->, dashed, mydarkviolet]
    (ni.south) to[out=-130, in=150] (Fi.north west);

\draw[->, dashed, mylightblue]
    (nj.south) to[out=-70, in=70] (Jj.north);
\draw[->, dashed, mydarkviolet]
    (nj.south) to[out=-50, in=30] (Fj.north east);

\draw[->, dashed, mylightblue]
    (etop) to[out=-70, in=70] (Je.north east);
\draw[->, dashed, mydarkviolet]
    (etop) to[out=-110, in=110] (Fe.north west);

\draw[->, dashed, mylightblue]
    ($(ni)!0.5!(etop)+(0,-0.2)$) to[out=-110, in=110]
    ($(Ji)!0.5!(Je)+(0,0.5)$);
\draw[->, dashed, mylightblue]
    ($(nj)!0.5!(etop)+(0,-0.2)$) to[out=-70, in=70]
    ($(Jj)!0.5!(Je)+(0,0.5)$);

\draw[->, dashed, mydarkviolet]
    ($(ni)!0.5!(etop)+(0,-0.2)$) to[out=-130, in=130]
    ($(Fi)!0.5!(Fe)+(0,0.2)$);
\draw[->, dashed, mydarkviolet]
    ($(nj)!0.5!(etop)+(0,-0.2)$) to[out=-50, in=50]
    ($(Fj)!0.5!(Fe)+(0,0.2)$);
 
\end{tikzpicture}%
}
\caption{Dictionaries as components of the \myorange{natural transformation $\delta$}~\cite{mac1971categories} between the \mylightblue{representation sheaf \nscoeff} and the \mydarkviolet{signal sheaf \ns} over the edge $e=(i,j)$.}
\label{fig:natural_transformations}
\end{figure}

Unlike the node dictionaries, however, the edge dictionaries $\{\D_e\}_{e\in\edgeset}$ and the representation restriction maps $\restrictionmapscoeff{i}{e}$ are not known a priori.
Their existence is constrained by the requirement that the representation preserves the local consistency relations of the signal sheaf.
This requirement is expressed by the commutative diagrams illustrated in Fig.~\ref{fig:natural_transformations}, namely
\begin{equation}
\label{eq:naturality}
\restrictionmap{i}{e}\D_i
=
\D_e\restrictionmapscoeff{i}{e},
\qquad
\restrictionmap{j}{e}\D_j
=
\D_e\restrictionmapscoeff{j}{e},
\end{equation}
for every edge $e=(i,j)$.

Equation~\eqref{eq:naturality} naturally leads to the following question: given the node dictionaries $\{\D_i\}_{i\in\vertexset}$, under which conditions does there exist a consistent representation sheaf $\nscoeff$?
The next proposition provides the answer.

\begin{proposition}[Existence of local representation maps] \label{prop:local_representation_maps_existence} 
Let $e=(i,j)$ be an edge of $\graph$, and assume that the node dictionaries $\D_i\in\stiefel{d_i}{c_i}$ and $\D_j\in\stiefel{d_j}{c_j}$ are given. 
Define 
\begin{equation} \label{eq:edge_lifted_concat} 
\A_e \coloneqq \begin{bmatrix} \restrictionmap{i}{e}\D_i & \restrictionmap{j}{e}\D_j \end{bmatrix}. 
\end{equation} 
Then there exist an edge dictionary $\D_e\in\stiefel{d_e}{c_e}$ and representation restriction maps $\restrictionmapscoeff{i}{e}$, $\restrictionmapscoeff{j}{e}$ satisfying \eqref{eq:naturality} if and only if 
\begin{equation} 
\label{eq:rank_condition_representation_maps} 
\mathrm{rank}(\A_e)\leq c_e. 
\end{equation} 
Equivalently, the images of $\restrictionmap{i}{e}\D_i$ and $\restrictionmap{j}{e}\D_j$ must lie in a common subspace of $\ns(e)$ of dimension at most $c_e$. 
\end{proposition}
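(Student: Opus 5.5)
The proof is a direct column-space argument, split into the two directions. Note first that \eqref{eq:naturality} can be written compactly as
\begin{equation*}
\A_e = \D_e \begin{bmatrix} \restrictionmapscoeff{i}{e} & \restrictionmapscoeff{j}{e} \end{bmatrix},
\end{equation*}
so the question is whether $\A_e$ factors through a matrix with $c_e$ orthonormal columns.

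\emph{Necessity.} Suppose $\D_e\in\stiefel{d_e}{c_e}$ and the maps $\restrictionmapscoeff{i}{e}$, $\restrictionmapscoeff{j}{e}$ satisfy \eqref{eq:naturality}. Then every column of $\A_e$ lies in $\mathrm{Im}(\D_e)$. Hence $\mathrm{rank}(\A_e)\le\mathrm{rank}(\D_e)=c_e$. This also gives the ``equivalently'' clause: both images sit inside the common subspace $\mathrm{Im}(\D_e)\subseteq\ns(e)$, whose dimension is $c_e$.

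\emph{Sufficiency.} Let $r=\mathrm{rank}(\A_e)\le c_e$. The construction proceeds in three steps.
\begin{enumerate}
\item Take an orthonormal basis $\U_r\in\reall^{d_e\times r}$ of $\mathrm{Im}(\A_e)$, for instance the left singular vectors of a thin SVD of $\A_e$.
\item Complete it with $c_e-r$ further orthonormal vectors of $\reall^{d_e}$ orthogonal to $\mathrm{Im}(\A_e)$. This is possible as long as $c_e\le d_e$.
\item Set $\D_e=[\U_r\;\U_\perp]$, so that $\D_e^\top\D_e=\eye{c_e}$, and define $\restrictionmapscoeff{i}{e}\coloneqq\D_e^\top\restrictionmap{i}{e}\D_i$ and $\restrictionmapscoeff{j}{e}\coloneqq\D_e^\top\restrictionmap{j}{e}\D_j$.
\end{enumerate}
Since $\D_e\D_e^\top$ is the orthogonal projector onto $\mathrm{Im}(\D_e)\supseteq\mathrm{Im}(\A_e)$, we get $\D_e\restrictionmapscoeff{i}{e}=\D_e\D_e^\top\restrictionmap{i}{e}\D_i=\restrictionmap{i}{e}\D_i$, and likewise for $j$. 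This establishes \eqref{eq:naturality}. The converse statement of the equivalent formulation follows the same way: given a common subspace of dimension at most $c_e$, its dimension bounds the rank.

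The only delicate point is dimensional bookkeeping rather than a genuine obstacle. The Stiefel definition \eqref{eq:stiefel_manifold} requires $c_e<d_e$, or at least $c_e\le d_e$, for the completion in step 2 to exist. The degenerate case $r=0$ must also be handled, where $\D_e$ is simply any orthonormal frame. I would state the implicit assumption $c_e\le d_e$ explicitly, and note that the choice of $\D_e$ is unique only up to a right orthogonal factor on the complement directions.

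Finally, the Stiefel normalization is not restrictive. If one allowed a general full-column-rank $\D_e$, one could orthonormalize it via QR and absorb the triangular factor into the restriction maps, so the rank condition is unchanged.
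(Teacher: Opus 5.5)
Your proof is correct and follows the paper's argument. Necessity comes from $\mathrm{Im}(\A_e)\subseteq\mathrm{Im}(\D_e)$, and sufficiency comes from taking an orthonormal $c_e$-frame whose span contains $\mathrm{Im}(\A_e)$; your explicit choice $\restrictionmapscoeff{i}{e}=\D_e^\top\restrictionmap{i}{e}\D_i$ is exactly what the paper separates out as \cref{prop:local_representation_maps_construction}, and, like you, the paper tacitly needs $c_e\le d_e$ for the completion. One small aside: your closing uniqueness remark understates the freedom, because the $c_e-r$ completion vectors may span any $(c_e-r)$-dimensional subspace of $\mathrm{Im}(\A_e)^\perp$, not merely a rotation of a fixed one.
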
 
\begin{proof} 
See Supp. S2. 
\end{proof}

Having established the necessary and sufficient condition for the existence of a consistent local representation sheaf, we now provide an explicit construction of the corresponding edge dictionaries and consistent representation restriction maps.

\begin{proposition}[Construction of local representation maps]
\label{prop:local_representation_maps_construction}
Under the assumptions of \Cref{prop:local_representation_maps_existence}, suppose that $\mathrm{rank}(\A_e)\leq c_e$. 
Let $r_e=\mathrm{rank}(\A_e)$, and let $\U_e\in\reall^{d_e\times r_e}$ be an orthonormal basis of $\mathrm{Im}(\A_e)$.
Complete $\U_e$ to a matrix $\D_e\in\stiefel{d_e}{c_e}$ by adding $c_e-r_e$ orthonormal columns.
Then the representation restriction maps can be chosen as
\begin{equation}
\label{eq:constructive_representation_restrictions}
\restrictionmapscoeff{i}{e}
=
\D_e^\top
\restrictionmap{i}{e}
\D_i,
\qquad
\restrictionmapscoeff{j}{e}
=
\D_e^\top
\restrictionmap{j}{e}
\D_j.
\end{equation}
With this choice, the commutativity conditions \eqref{eq:naturality} hold true.
\end{proposition}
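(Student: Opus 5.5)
The plan is to verify that the prescribed $\D_e$ is a valid Stiefel matrix and then check the commutativity conditions \eqref{eq:naturality} directly. The whole argument rests on one observation: $\D_e\D_e^\top$ is the orthogonal projector onto $\mathrm{Im}(\D_e)$, and this subspace contains both columns of $\A_e$.

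\textbf{Step 1: the completion exists.} Since $r_e=\mathrm{rank}(\A_e)\leq c_e$, and $c_e\le d_e$ is implicit in requiring $\D_e\in\stiefel{d_e}{c_e}$, the orthonormal set given by the columns of $\U_e$ can be extended to $c_e$ orthonormal vectors of $\reall^{d_e}$. This follows from Gram--Schmidt applied to $\U_e$ together with any basis of $\reall^{d_e}$. The resulting $\D_e=[\U_e\;\;\U_e^\perp]$ therefore satisfies $\D_e^\top\D_e=\eye{c_e}$, and by construction
\[
\mathrm{Im}(\A_e)=\mathrm{Im}(\U_e)\subseteq\mathrm{Im}(\D_e).
\]

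\textbf{Step 2: the projector identity.} Because $\D_e^\top\D_e=\eye{c_e}$, the matrix $\mathbf{P}_e=\D_e\D_e^\top$ is symmetric and idempotent with range $\mathrm{Im}(\D_e)$. It is therefore the Euclidean orthogonal projector onto that subspace, so $\mathbf{P}_e\vecv=\vecv$ for every $\vecv\in\mathrm{Im}(\D_e)$.

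\textbf{Step 3: verification of \eqref{eq:naturality}.} The columns of $\restrictionmap{i}{e}\D_i$ are columns of $\A_e$, so $\mathrm{Im}(\restrictionmap{i}{e}\D_i)\subseteq\mathrm{Im}(\A_e)\subseteq\mathrm{Im}(\D_e)$. Using the choice \eqref{eq:constructive_representation_restrictions}, we get
\[
\D_e\restrictionmapscoeff{i}{e}=\D_e\D_e^\top\restrictionmap{i}{e}\D_i=\mathbf{P}_e\restrictionmap{i}{e}\D_i=\restrictionmap{i}{e}\D_i .
\]
The argument for $j$ is identical.

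\textbf{Main point of care.} The only delicate step is the projector identity. The Stiefel condition is Euclidean ($\D_e^\top\D_e=\eye{}$, not $\D_e^\top\G_e\D_e=\eye{}$), so $\D_e\D_e^\top$ must be the Euclidean projector and the metric $\G_e$ plays no role here. This also means the representation maps are independent of the metric; the metric enters the representation sheaf only through $\Hmet_e$ in \eqref{eq:induced_metric_edge}. I would state this remark explicitly.
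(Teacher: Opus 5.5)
Your proposal is correct and follows essentially the same route as the paper: both use $\mathrm{Im}(\mathbf{A}_e)\subseteq\mathrm{Im}(\mathbf{D}_e)$ to get $\mathbf{D}_e\mathbf{D}_e^\top\mathbf{A}_e=\mathbf{A}_e$, from which $\mathbf{D}_e\mathbf{J}_{i\trianglelefteq e}=\mathbf{D}_e\mathbf{D}_e^\top\mathbf{F}_{i\trianglelefteq e}\mathbf{D}_i=\mathbf{F}_{i\trianglelefteq e}\mathbf{D}_i$ follows at once. Your explicit argument that the orthonormal completion exists, and your remark that only the Euclidean Stiefel condition matters here (the metric $\mathbf{G}_e$ plays no role), are valid additions that the paper leaves implicit.
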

\begin{proof}
    See Supp. S2.
\end{proof}
The previous propositions provide {\it local} conditions for constructing the representation over a single edge. 
If these conditions hold for every edge, the collection of dictionaries $\delta=\{\D_o\}_{o \in \vertexset \cup \edgeset}$ represents a \emph{natural transformation}~\cite{mac1971categories} from the representation sheaf $\nscoeff$ to the signal sheaf $\ns$ 
\begin{equation}\label{eq:delta}
    \delta: \nscoeff \Rightarrow \ns\,.
\end{equation}
The identification of a natural transformation $\delta$ from $\nscoeff$ to $\ns$ is useful not only to reconstruct $\x$ from $\vecs$, but also to transport consistency properties from one sheaf to the other, as it will be extensively shown in the following.

Let us represent the node and edge components of $\delta$ as block-diagonal matrices
\begin{equation}\label{eq:delta_components}
    \mathbb D_0 \coloneqq \mathrm{blkdiag}(\{\D_i\}_{i\in\vertexset})\,,
    \;\;
    \mathbb D_1 \coloneqq \mathrm{blkdiag}(\{\D_e\}_{e\in\edgeset})\,,
\end{equation}
acting respectively on $0$- and $1$-cochains, $\mathbb D_0:\cochainspace{0}{\graph}{\nscoeff}\to\cochainspace{0}{\graph}{\ns}$ and $\mathbb D_1:\cochainspace{1}{\graph}{\nscoeff}\to\cochainspace{1}{\graph}{\ns}$.
Then, it is not difficult to prove the following.
\begin{proposition}[Global section correspondence]\label{prop:section_correspondence}
    Let $\delta$ as in \cref{eq:delta} be a natural transformation with node component $\mathbb D_0$.
    Then:
    \begin{enumerate}
        \item[(i)] If $\vecs \in \gsspace{\graph}{\nscoeff}$, then $\x=\mathbb D_0 \vecs$ satisfies $\x \in \gsspace{\graph}{\ns}$.
        \item[(ii)] Conversely, if $\x \in \gsspace{\graph}{\ns}$ and $\x \in \im{\mathbb D_0}$, then $\vecs=\mathbb D_0^\top \x$ satisfies $\vecs \in \gsspace{\graph}{\nscoeff}$ and $\x = \mathbb D_0\vecs$.
    \end{enumerate}
    Consequently, $\mathbb D_0$ restricts to a linear isomorphism
    \begin{equation}\label{eq:section_iso}
        \gsspace{\graph}{\nscoeff} \;\cong\; \gsspace{\graph}{\ns} \cap \im{\mathbb D_0}\,,
        \quad \im{\mathbb D_0} \coloneqq \bigoplus_{i \in \vertexset}\im{\D_i}\,.
    \end{equation}
\end{proposition}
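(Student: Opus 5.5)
The plan is to lift the edgewise naturality \eqref{eq:naturality} to the global identity
\begin{equation*}
\B_\ns \mathbb D_0 = \mathbb D_1 \B_\nscoeff,
\end{equation*}
where $\B_\ns$ and $\B_\nscoeff$ denote the coboundary operators \eqref{eq:coboundary} of $\ns$ and $\nscoeff$. Everything else follows from this intertwining together with $\ker(\mathbf{L})=\ker(\B)$ and the Stiefel property of the dictionaries.

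\emph{Step 1 (intertwining).} Fix an oriented edge $e=(i,j)$. By \eqref{eq:naturality},
\begin{equation*}
(\B_\ns\mathbb D_0\vecs)_e=\restrictionmap{i}{e}\D_i\vecs_i-\restrictionmap{j}{e}\D_j\vecs_j=\D_e\bigl(\restrictionmapscoeff{i}{e}\vecs_i-\restrictionmapscoeff{j}{e}\vecs_j\bigr)=(\mathbb D_1\B_\nscoeff\vecs)_e .
\end{equation*}
Since this holds for every edge, it gives the block identity above. Checking this blockwise from \eqref{eq:coboundary_row} is the only real computation, and it is routine.

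\emph{Step 2 (part (i)).} Take $\vecs\in\gsspace{\graph}{\nscoeff}=\ker\B_\nscoeff$. Then $\B_\ns\mathbb D_0\vecs=\mathbb D_1\B_\nscoeff\vecs=\zeros$, so $\mathbb D_0\vecs\in\ker\B_\ns=\gsspace{\graph}{\ns}$.

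\emph{Step 3 (part (ii)).} Take $\x\in\gsspace{\graph}{\ns}\cap\im{\mathbb D_0}$. Since $\D_i^\top\D_i=\eye{c_i}$, the matrix $\mathbb D_0\mathbb D_0^\top$ is the orthogonal projector onto $\im{\mathbb D_0}$. Hence $\x=\mathbb D_0\mathbb D_0^\top\x=\mathbb D_0\vecs$ with $\vecs=\mathbb D_0^\top\x$. Step 1 then gives
\begin{equation*}
\mathbb D_1\B_\nscoeff\vecs=\B_\ns\x=\zeros .
\end{equation*}
The edge dictionaries lie in the Stiefel manifold, so $\mathbb D_1$ has orthonormal columns and is injective. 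Therefore $\B_\nscoeff\vecs=\zeros$, i.e.\ $\vecs\in\gsspace{\graph}{\nscoeff}$. This injectivity of $\mathbb D_1$ is the point to flag: without it, (ii) would fail. It holds precisely because $\D_e\in\stiefel{d_e}{c_e}$ in \cref{prop:local_representation_maps_existence}.

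\emph{Step 4 (isomorphism \eqref{eq:section_iso}).} By (i), $\mathbb D_0$ maps $\gsspace{\graph}{\nscoeff}$ into $\gsspace{\graph}{\ns}\cap\im{\mathbb D_0}$. This restriction is injective because $\mathbb D_0^\top\mathbb D_0=\identity$. It is surjective by (ii), and its inverse is $\mathbb D_0^\top$.

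The metric tensors play no role anywhere in this argument, since the kernels depend only on the coboundaries.
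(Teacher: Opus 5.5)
Your proof is correct and is essentially the paper's argument. The paper works edge by edge: it uses naturality to write $\restrictionmap{i}{e}\x_i=\D_e\restrictionmapscoeff{i}{e}\vecs_i$, recovers $\vecs_i=\D_i^\top\x_i$ from the Stiefel property, and cancels the injective $\D_e$; your Steps 1--3 do the same thing packaged globally through $\B_{\ns}\mathbb{D}_0=\mathbb{D}_1\B_{\nscoeff}$, an identity the paper itself derives later in the proof of the intertwining theorem.
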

\begin{proof}
    See Supp. S2.
\end{proof}

\begin{remark}[Interpretation of image containment in SP]\label{rem:image_containment}
    The image-containment condition $\im{\A_e}\subseteq\im{\D_e}$ in \cref{prop:local_representation_maps_existence} has a natural interpretation in SP: 
    the edge dictionary $\D_e$ must be expressive enough to represent any signal that are transported to the edge stalk  from node $i$ and node $j$ through $\restrictionmap{i}{e}$ and $\restrictionmap{j}{e}$. 
    In other words, $\restrictionmap{i}{e}$ and $\restrictionmap{j}{e}$ should not map the node signal models $\im{\D_i}$ and $\im{\D_j}$ outside the edge signal model $\im{\D_e}$.
\end{remark}

%% file: sections/3-sheaf-spectral-theory.tex
\section{Spectral theory over a network sheaf}\label{sec:Spectral theory over a network sheaf}
In this section, we introduce the Sheaf Fourier Transform (SFT) as a generalization of the GFT. 
Since the signal sheaf $\ns$ and the representation sheaf $\nsrep$ are related through the natural transformation $\delta$ in \cref{eq:delta}, the SFT can be defined on either sheaf. 
Unless otherwise specified, we develop the spectral theory on the representation sheaf, whose lower-dimensional stalks provide a more compact representation while preserving the spectral structure of the signal sheaf (cf. \cref{Spectral correspondence between the signal and representation sheaves}). 

\subsection{Sheaf Fourier Transform}
Consider the representation sheaf $\nsrep$ over the graph $\graph$, where $|\vertexset|=N$ and $|\edgeset|=E$.
Generalizing the approach used in GSP, the spectral analysis of sheaf signals is based on the eigendecomposition of the sheaf Laplacian with respect to the inner product induced by the metric tensor $\mathbb{H}_0$.
Since $\mathbf{L}_{\nsrep}$ is self-adjoint with respect to $\Eprod{\cdot}{\cdot}{C^0}$, the corresponding Euclidean symmetric operator (cf. Supp.~S1-B) is
\begin{equation}
\label{eq:symmetrized_laplacian}
\widetilde{\mathbf{L}}_{\nsrep}
=
\mathbb{H}_0^{1/2}
\mathbf{L}_{\nsrep}
\mathbb{H}_0^{-1/2}
=
\widetilde{\U}
\boldsymbol{\Lambda}
\widetilde{\U}^{\top},
\qquad
\widetilde{\U}^{\top}\widetilde{\U}=\eye{}.
\end{equation}

Defining
\begin{equation}
    \U=\mathbb{H}_0^{-1/2}\widetilde{\U}\,,    
\end{equation}
we obtain the generalized eigendecomposition
\begin{equation}
\label{eq:sheaf_spectral}
\mathbf{L}_{\nsrep}\U
=
\U\boldsymbol{\Lambda},
\qquad
\U^{\top}\mathbb{H}_0\U=\eye{},
\end{equation}
where $\boldsymbol{\Lambda} = \mathrm{diag}(\lambda_1,\ldots,\lambda_C)$, $0\le\lambda_1\le\cdots\le\lambda_C$.
The columns $\{\vecu_\ell\}_{\ell=1}^{C}$ of $\U$ are called the \emph{sheaf Fourier modes}, while the corresponding eigenvalues $\{\lambda_\ell\}$ are the \emph{sheaf frequencies}.
Each mode $\vecu_\ell=[(\vecu_\ell^1)^\top,\ldots,(\vecu_\ell^N)^\top]^\top$ contains one block $\vecu_\ell^i$ for every node stalk.

\begin{definition}[Sheaf Fourier Transform]
\label{def:sft}
The Sheaf Fourier Transform (SFT) of $\vecs\in C^0(\graph;\nsrep)$ is the vector of coefficients
\begin{equation}
\label{eq:sft}
\widehat{\vecs} = \U^{\top}\mathbb{H}_0\vecs\,.
\end{equation}
The $\ell$-th Fourier coefficient is
\begin{equation}
\label{eq:sft_coefficient}
\widehat{\vecs}(\lambda_\ell) = \Eprod{\vecs}{\vecu_\ell}{C^0} = \sum_{i\in\vertexset} \vecs_i^{\top} \Hmet_i \vecu_\ell^i \,.
\end{equation}
The inverse SFT is
\begin{equation}
\label{eq:isft}
\vecs = \sum_{\ell=1}^{L} \widehat{\vecs}(\lambda_\ell)\,\vecu_\ell =\U\widehat{\vecs}\,.
\end{equation}
\end{definition}

The SFT computes the coefficients of the expansion of a sheaf signal over the orthonormal basis formed by the eigenvectors of the sheaf Laplacian.
In complete analogy with GSP, the eigenvalues of the Laplacian play the role of frequencies.
However, unlike the graph Laplacian, whose spectrum reflects only the graph topology, the spectrum of the sheaf Laplacian depends {\it jointly} on the graph topology, the restriction maps, and the local metric tensors.
In particular, the sheaf frequency $\lambda_\ell$ measures the inconsistency of the corresponding Fourier mode over the whole sheaf:
\begin{equation}
\lambda_\ell = \mathrm{TV}(\vecu_\ell) = \sum_{e=(i,j)\in\edgeset}\|\restrictionmapscoeff{i}{e}\vecu_\ell^i-\restrictionmapscoeff{j}{e}\vecu_\ell^j\|_{\Hmet_e}^{2}\,.
\end{equation}
From this perspective, the sheaf Fourier modes admit a variational interpretation. 
They constitute the sequence of $\mathbb{H}_0$-orthonormal sheaf signals of progressively increasing total inconsistency: 
the first mode minimizes the total inconsistency among all unit-norm signals, while each subsequent mode minimizes it subject to being orthogonal to all preceding modes. 
This interpretation extends the classical variational characterization of the GFT to network sheaves: 
while graph Fourier modes minimize variation over the graph topology, sheaf Fourier modes minimize inconsistency jointly induced by the graph topology, the restriction maps, and the local Hilbert-space geometry.

\subsection{Spectral correspondence between the signal and representation sheaves}
\label{Spectral correspondence between the signal and representation sheaves}
The SFT introduced in the previous subsection can be defined either on the signal sheaf $\ns$ or on the representation sheaf $\nsrep$. Owing to the natural transformation $\delta$ relating the two sheaves, however, the two spectral domains are not independent. 
As shown next, the corresponding sheaf Laplacians are intertwined by the natural transformation, implying that their spectral decompositions are in one-to-one correspondence within the representation subspace. 
Consequently, the SFT computed on the representation sheaf faithfully captures the spectral properties of the original signal sheaf while operating on lower-dimensional, signal-model-aware, stalks.
\begin{theorem}[Intertwining of the sheaf Laplacians]
\label{th:intertwining}
Let $\delta$ as in \cref{eq:delta} with node component $\mathbb{D}_0$ and edge component $\mathbb{D}_1$ as in \Cref{eq:delta_components}, respectively.
Assume that the signal sheaf and the representation sheaf are endowed with metric tensors
$\{\G_i,\G_e\}$ and $\{\Hmet_i,\Hmet_e\}$ in \cref{eq:induced_metric_node,eq:induced_metric_edge}.
Furthermore, consider that the restriction maps of the signal sheaf are the metric-compatible lifts of those of the representation sheaf:
\begin{equation}
\label{eq:metric_lift}
\restrictionmap{i}{e} = \D_e\,\restrictionmapscoeff{i}{e}\,\Hmet_i^{-1}\D_i^\top\G_i,\qquad \forall\, i\trianglelefteq e .
\end{equation}

Then the corresponding sheaf Laplacians satisfy the intertwining relation
\begin{equation}
\label{eq:spectral_intertwining}
 \mathbf{L}_\ns\,\, \mathbb{D}_0 = \mathbb{D}_0\,\, \mathbf{L}_\nscoeff\,.
\end{equation}
\end{theorem}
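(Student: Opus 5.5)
The plan is to factor both Laplacians through their coboundary operators, $\mathbf{L}_\ns=\mathbb{G}_0^{-1}\B_\ns^\top\mathbb{G}_1\B_\ns$ and $\mathbf{L}_\nscoeff=\mathbb{H}_0^{-1}\B_\nscoeff^\top\mathbb{H}_1\B_\nscoeff$, and to prove two intertwining identities, one for each factor. The first is for the coboundaries:
\[
\B_\ns\mathbb{D}_0=\mathbb{D}_1\B_\nscoeff .
\]
The second is for the adjoints:
\[
\mathbb{G}_0^{-1}\B_\ns^\top\mathbb{G}_1\mathbb{D}_1=\mathbb{D}_0\,\mathbb{H}_0^{-1}\B_\nscoeff^\top .
\]
Chaining them gives
\[
\mathbf{L}_\ns\mathbb{D}_0=\mathbb{G}_0^{-1}\B_\ns^\top\mathbb{G}_1\mathbb{D}_1\B_\nscoeff=\mathbb{D}_0\mathbb{H}_0^{-1}\B_\nscoeff^\top\mathbb{H}_1\B_\nscoeff=\mathbb{D}_0\mathbf{L}_\nscoeff ,
\]
which is \eqref{eq:spectral_intertwining}. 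The only tools needed are the block structure \eqref{eq:coboundary_row}, $\D_i^\top\D_i=\eye{}$, and the definitions \eqref{eq:induced_metric_node}, \eqref{eq:induced_metric_edge} of the induced metrics.

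\textbf{Coboundary identity.} Both sides are block matrices, so it suffices to check the $(e,i)$ block, which carries the same sign $\pm$ on both sides. This reduces to the naturality square \eqref{eq:naturality}, which I will re-derive from the metric lift \eqref{eq:metric_lift} rather than assume:
\[
\restrictionmap{i}{e}\D_i=\D_e\restrictionmapscoeff{i}{e}\Hmet_i^{-1}\D_i^\top\G_i\D_i=\D_e\restrictionmapscoeff{i}{e}\Hmet_i^{-1}\Hmet_i=\D_e\restrictionmapscoeff{i}{e}.
\]

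\textbf{Adjoint identity.} Again I check the $(i,e)$ block, which asks for
\[
\G_i^{-1}\restrictionmap{i}{e}^\top\G_e\D_e=\D_i\Hmet_i^{-1}\restrictionmapscoeff{i}{e}^\top\Hmet_e .
\]
To verify it, transpose \eqref{eq:metric_lift}, using the symmetry of $\G_i$ and $\Hmet_i$:
\[
\restrictionmap{i}{e}^\top=\G_i\D_i\Hmet_i^{-1}\restrictionmapscoeff{i}{e}^\top\D_e^\top .
\]
Then
\[
\G_i^{-1}\restrictionmap{i}{e}^\top\G_e\D_e=\D_i\Hmet_i^{-1}\restrictionmapscoeff{i}{e}^\top\D_e^\top\G_e\D_e=\D_i\Hmet_i^{-1}\restrictionmapscoeff{i}{e}^\top\Hmet_e ,
\]
by \eqref{eq:induced_metric_edge}. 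Blocks for non-incident pairs vanish on both sides, and the head/tail signs agree.

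\textbf{Main obstacle.} I expect the adjoint identity to be the crux. Naturality \eqref{eq:naturality} alone yields the coboundary identity but not this one. The difficulty is that $\mathbb{D}_0$ is an isometric embedding with respect to the induced metrics but generally not a $\mathbb{G}_0$-orthogonal projection. The full metric-compatible lift, and in particular its factor $\Hmet_i^{-1}\D_i^\top\G_i$ (the $\G_i$-adjoint left inverse of $\D_i$), is exactly what makes the adjoints intertwine. I will therefore carry out this step carefully, checking that $\G_i$ and $\G_e$ cancel properly under transposition.
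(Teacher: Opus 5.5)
Your proposal is correct and follows the paper's proof essentially step by step. Both arguments use the coboundary intertwining $\B_{\ns}\mathbb{D}_0=\mathbb{D}_1\B_{\nscoeff}$, then the blockwise adjoint intertwining (obtained by transposing the metric lift and using $\D_e^\top\G_e\D_e=\Hmet_e$), then the final chaining. Your only departure is that you re-derive the naturality squares from the lift via $\D_i^\top\G_i\D_i=\Hmet_i$ instead of assuming them, which is harmless and slightly more self-contained.

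One slip to fix: your opening global statement of the adjoint identity drops the factor $\mathbb{H}_1$ on the right-hand side. It should read $\mathbb{G}_0^{-1}\B_{\ns}^{\top}\mathbb{G}_1\mathbb{D}_1=\mathbb{D}_0\mathbb{H}_0^{-1}\B_{\nscoeff}^{\top}\mathbb{H}_1$, which is what your blockwise check and your chaining step actually use.
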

\begin{proof}
    See Supp. S2.
\end{proof}

This proposition has an immediate consequence on the spectral theory, as established next.
\begin{corollary}[Spectral correspondence]
\label{cor:spectral_correspondence}
Let $\mathbf{L}_{\nsrep}\vecv=\mu\vecv\,$.
Then
\begin{equation}
    \mathbf{L}_{\ns}\mathbb{D}_0\vecv=\mu\,\mathbb{D}_0\vecv\,.    
\end{equation}
Hence, every eigenvalue of the representation sheaf Laplacian is an eigenvalue of the signal sheaf Laplacian, and the natural transformation maps every eigenspace of $\mathbf{L}_{\nsrep}$ into the eigenspace of $\mathbf{L}_{\ns}$ associated with the same eigenvalue.

Additionally, since  
\begin{equation}
\label{isometry}
\left\langle
\mathbb{D}_0\vecv,\mathbb{D}_0\vecw
\right\rangle_{\mathcal{C}^{0}(\mathcal{G};F)} =\left\langle \vecv, \vecw \right\rangle_{\mathcal{C}^{0}(\mathcal{G};J)} \,,
\end{equation}
every $\mathbb H_0$-orthonormal eigenbasis of $\mathbf{L}_{\nsrep}$ is mapped into a $\mathbb G_0$-orthonormal set of eigenvectors of $\mathbf{L}_{\ns}$.
\end{corollary}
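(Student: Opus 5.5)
The corollary has two parts, and both follow quickly from \cref{th:intertwining} together with the metric definitions \cref{eq:induced_metric_node}. I will treat the eigenvalue transfer first and the isometry second.

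\textbf{Eigenvalue transfer.} Take $\vecv$ with $\mathbf{L}_{\nsrep}\vecv=\mu\vecv$ and multiply on the left by $\mathbb{D}_0$. The intertwining relation \cref{eq:spectral_intertwining} then gives
\begin{equation*}
\mathbf{L}_\ns\mathbb{D}_0\vecv=\mathbb{D}_0\mathbf{L}_{\nsrep}\vecv=\mu\,\mathbb{D}_0\vecv .
\end{equation*}
For this to show that $\mu$ is an eigenvalue of $\mathbf{L}_\ns$, I also need $\mathbb{D}_0\vecv\neq\zeros$. Each $\D_i$ lies in $\stiefel{d_i}{c_i}$, so $\D_i^\top\D_i=\eye{c_i}$ and therefore $\mathbb{D}_0^\top\mathbb{D}_0=\eye{C}$. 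Hence $\mathbb{D}_0$ is injective, and $\mathbb{D}_0\vecv\neq\zeros$ whenever $\vecv\neq\zeros$.

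By linearity, $\mathbb{D}_0$ maps the whole $\mu$-eigenspace of $\mathbf{L}_{\nsrep}$ into the $\mu$-eigenspace of $\mathbf{L}_\ns$. Injectivity further shows that this map preserves dimension, so the multiplicity of $\mu$ for $\mathbf{L}_\ns$ is at least its multiplicity for $\mathbf{L}_{\nsrep}$.

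\textbf{Isometry.} The claim \cref{isometry} is a one-line block computation. Using $\mathbb{H}_0=\mathrm{blkdiag}(\Hmet_i)$ with $\Hmet_i=\D_i^\top\G_i\D_i$, I get
\begin{equation*}
\Eprod{\mathbb{D}_0\vecv}{\mathbb{D}_0\vecw}{C^0(\graph;\ns)}=\vecv^\top\mathbb{D}_0^\top\mathbb{G}_0\mathbb{D}_0\vecw=\sum_{i}(\vecv^i)^\top\D_i^\top\G_i\D_i\vecw^i=\vecv^\top\mathbb{H}_0\vecw .
\end{equation*}
The middle step uses only that $\mathbb{D}_0$ and $\mathbb{G}_0$ are block diagonal with compatible blocks, so $\mathbb{D}_0^\top\mathbb{G}_0\mathbb{D}_0=\mathbb{H}_0$.

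\textbf{Orthonormal eigenbases.} Now let $\U$ be an $\mathbb{H}_0$-orthonormal eigenbasis of $\mathbf{L}_{\nsrep}$, as in \cref{eq:sheaf_spectral}. By the eigenvalue-transfer step, each $\mathbb{D}_0\vecu_\ell$ is an eigenvector of $\mathbf{L}_\ns$. By the isometry, $(\mathbb{D}_0\U)^\top\mathbb{G}_0(\mathbb{D}_0\U)=\U^\top\mathbb{H}_0\U=\eye{}$. So the image $\mathbb{D}_0\U$ is a $\mathbb{G}_0$-orthonormal set of eigenvectors of $\mathbf{L}_\ns$.

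There is no real obstacle in this argument. The only point needing care is the nonvanishing of $\mathbb{D}_0\vecv$, which is supplied by the Stiefel assumption on the dictionaries. Note that \cref{eq:metric_lift} enters only through \cref{th:intertwining}, which is assumed here.
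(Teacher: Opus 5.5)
Your proof is correct and follows the paper's argument: the eigenvalue transfer comes from applying the intertwining relation of \cref{th:intertwining}, and the isometry from the blockwise identity $\mathbb{D}_0^\top\mathbb{G}_0\mathbb{D}_0=\mathbb{H}_0$. You also explicitly discharge the condition $\mathbb{D}_0\vecv\neq\zeros$ through the Stiefel property, which the paper leaves only as a proviso; the isometry together with positive-definite $\mathbb{H}_0$ would settle it equally well.
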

\begin{proof}
See Supp. S2.
\end{proof}

The previous corollary shows that the representation sheaf is not merely a low-dimensional description of the signal sheaf, but a \emph{spectrally faithful representation} of it. 
The natural transformation preserves the eigenvalues and maps the Fourier modes of the representation sheaf into Fourier modes of the signal sheaf with identical frequencies. 
Consequently, all spectral quantities associated with the representation sheaf admit an equivalent interpretation on the signal sheaf. 
In the remainder of the paper, we therefore develop the spectral theory on the representation sheaf, without loss of information within the subspace generated by the dictionaries.

\subsection{Intrinsic spectral representation}
Unlike graph Laplacians, whose eigenvalues are generically simple, the spectrum of a sheaf Laplacian often exhibits eigenvalues with multiplicity greater than one. 
This phenomenon is not accidental, but reflects the geometric structure encoded by the restriction maps. 
Whenever the latter transport information coherently across the sheaf, the associated eigenspaces acquire additional degrees of freedom, leading to repeated eigenvalues. 
Supp. S3 illustrates two representative mechanisms generating spectral multiplicity, namely the connection graph \cite{chung2014ranking} and the hierarchical Stiefel embedding \cite{dacunto2026networkscausalabstractionssheaftheoretic}.

Importantly, spectral multiplicity makes the individual SFT coefficients depending on the particular orthonormal basis selected within each eigenspace.
Consequently, the Fourier coefficients associated with individual eigenvectors are not intrinsic properties of either the signal or the sheaf.

The intrinsic spectral objects are instead the \emph{eigenspaces} of the sheaf Laplacian.
Let $\{\mu_k\}_{k=1}^{K}$ denote the set of distinct eigenvalues of $\mathbf{L}_{\nsrep}$.
The eigenspace associated with the frequency $\mu_k$ is
\begin{equation}
\label{eq:eigenspace}
\mathcal{E}_k = \ker\!\left(\mathbf{L}_{\nsrep}-\mu_k\eye{C}\right)\,,
\end{equation}
with dimension $m_k$. Let $\U_k\in\reall^{C\times m_k}$ be any $\mathbb{H}_0$-orthonormal basis of $\mathcal{E}_k$, namely
    $\U_k^\top\mathbb{H}_0\U_k=\eye{m_k}$.    
Although the basis $\U_k$ is not unique, the orthogonal projector onto $\mathcal{E}_k$,
\begin{equation}
\label{eq:projector}
\mathbf{P}_k = \U_k\U_k^\top\mathbb{H}_0,
\end{equation}
is uniquely determined by the eigenspace and is therefore independent of the particular basis. Then, the intrinsic spectral component of $\vecs\in C^0(\graph;\nsrep)$ at frequency $\mu_k$ is its orthogonal projection onto $\mathcal{E}_k$,
\begin{equation}
\label{eq:intrinsic_component}
\vecs_k = \mathbf{P}_k\vecs \in \mathcal{E}_k\,.
\end{equation}
Since the eigenspaces are mutually $\mathbb{H}_0$-orthogonal and span $C^0(\graph;\nsrep)$, the projectors satisfy
\begin{equation}
\label{eq:projector_properties}
\mathbf{P}_k\mathbf{P}_\ell = \delta_{k\ell}\mathbf{P}_k\,,
\quad \sum_{k=1}^{K}\mathbf{P}_k = \eye{L}\,, 
\quad \mathbf{L}_{\nsrep} = \sum_{k=1}^{K} \mu_k\mathbf{P}_k\,,
\end{equation}
which is the spectral decomposition of the sheaf Laplacian. The signal then admits the intrinsic spectral decomposition
\begin{equation}
\label{eq:intrinsic_decomposition}
\vecs = \sum_{k=1}^{K}\vecs_k,
\end{equation}
where each component $\vecs_k$ in \eqref{eq:intrinsic_component} contains the portion of the signal associated with the frequency $\mu_k$.
Unlike the classical SFT coefficients, the decomposition in \cref{eq:intrinsic_decomposition} is invariant with respect to any orthogonal change of basis within the eigenspaces. 
Furthermore, the energy carried by the $k$-th spectral component is
\begin{equation}
\label{eq:intrinsic_energy}
\|\vecs_k\|_{\mathbb{H}_0}^{2} = \vecs^\top \mathbb{H}_0 \mathbf{P}_k \vecs,
\end{equation}
and Parseval's identity becomes
\begin{equation}
\label{eq:intrinsic_parseval}
\|\vecs\|_{\mathbb{H}_0}^{2} = \sum_{k=1}^{K} \|\vecs_k\|_{\mathbb{H}_0}^{2}.
\end{equation}
In the absence of repeated eigenvalues, every eigenspace is one-dimensional and the proposed formulation therefore constitutes a natural extension of the GFT to network sheaves: 
it preserves the classical spectral representation whenever the spectrum is simple, while remaining well-defined and invariant under orthogonal changes of basis within the eigenspaces when repeated eigenvalues occur.

%% file: sections/4-sheaf_filter.tex
\section{Sheaf filtering}\label{sec:filtering}

Filtering is a fundamental operation in SP. 
In GSP, linear shift-invariant filters are functions of the graph Laplacian, whose spectrum defines the graph frequency domain. 
This framework extends naturally to network sheaves by replacing the graph Laplacian with the sheaf Laplacian. 
Thanks to the spectral correspondence established previously, filtering can be developed directly on the representation sheaf, yielding lower-dimensional filters that preserve the spectral behavior of their counterparts on the signal sheaf.

A linear sheaf filter is any matrix function of the sheaf Laplacian. Among the possible choices, polynomial filters are particularly appealing, since they provide a parsimonious parameterization requiring only a small number of coefficients while preserving locality on the graph. 
Accordingly, we consider the class of polynomial filters
\begin{equation}
\label{eq:poly_filter}
\vecy
=
\sum_{q=0}^{Q}
a_q
\mathbf{L}_{\nsrep}^{\,q}
\vecs,
\end{equation}
where
$\{a_q\}_{q=0}^{Q}$
are the filter coefficients. 
Using the spectral decomposition of $\mathbf{L}_\nscoeff$ in \cref{eq:projector_properties},
the polynomial filter (\ref{eq:poly_filter}) can be rewritten as
\begin{equation}
\label{eq:spectral_filter}
\vecy
=
\sum_{k=1}^{K}
h(\mu_k)\,
\mathbf{P}_k
\vecs \qquad \text{with}\qquad h(\mu)=\sum_{q=0}^{Q}a_q\mu^{q}\,;
\end{equation}
where $h(\mu)$ in (\ref{eq:spectral_filter}) is the frequency response of the filter.

Let $g_k$ denote the desired gain at the distinct sheaf frequency $\mu_k$. 
Since a polynomial filter satisfies
\begin{equation}
h_k
\coloneqq
h(\mu_k)
=
\sum_{q=0}^{Q}a_q\mu_k^q,
\qquad k=1,\ldots,K\,;
\end{equation}
the filter coefficients can be selected by solving
\begin{equation}
\label{eq:filter_design_ls}
\min_{\{a_q\}_{q=0}^{Q}}
\sum_{k=1}^{K}
\left|
g_k
-
\sum_{q=0}^{Q}a_q\mu_k^q
\right|^2.
\end{equation}
Equivalently, defining the Vandermonde matrix $\bm \Phi \in\reall^{K\times(Q+1)}$ with entries
$[\bm \Phi]_{kq}=\mu_k^q$, $q=0,\ldots,Q$, and $\mathbf g=[g_1,\ldots,g_K]^\top$, one obtains
\begin{equation}
\label{eq:filter_design_vandermonde}
\mathbf a^\star
=
\argmin_{\mathbf a\in\reall^{Q+1}}
\|\mathbf g-\bm\Phi\mathbf a\|_2^2=\bm\Phi^\dagger\mathbf g,
\end{equation}
where $(\cdot)^\dagger$ denotes the Moore--Penrose pseudoinverse. 
If $Q\ge K-1$ and the frequencies $\{\mu_k\}$ are distinct, the mask can be interpolated exactly. 
Importantly, the filter design depends only on the distinct frequencies ${\mu_k}$ and the associated spectral projectors ${\mathbf P_k}$, without requiring a choice of eigenbasis within degenerate eigenspaces. 
In particular,
\begin{equation}
h(\mathbf L_{\nsrep})=\sum_{k=1}^{K} h_k\mathbf P_k,
\end{equation}
meaning that the same gain $h_k$ is applied to the entire intrinsic spectral component $\mathbf P_k\vecs$. 
Thus, filtering is intrinsically defined at the eigenspace level and is invariant to the particular orthonormal basis chosen within each eigenspace.

The intertwining relation established in \cref{th:intertwining} naturally extends to every polynomial filter.
\begin{corollary}[Spectral correspondence of polynomial filters]
\label{cor:filter_intertwining}

Let
\begin{equation}
    h(\lambda)=\sum_{q=0}^{Q}a_q\lambda^q.
\end{equation}
Then
\begin{equation}
\label{eq:filter_intertwining}
h(\mathbf{L}_{\ns})
\,\mathbb{D}_0
=
\mathbb{D}_0\,
h(\mathbf{L}_{\nsrep}).
\end{equation}
Consequently, filtering and lifting commute: 
filtering the representation signal and subsequently lifting it to the signal sheaf produces exactly the same result as first lifting the signal and then filtering it on the signal sheaf.
\end{corollary}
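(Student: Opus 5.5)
The plan is to reduce the statement to the intertwining relation of \cref{th:intertwining}, $\mathbf{L}_\ns\mathbb{D}_0=\mathbb{D}_0\mathbf{L}_\nsrep$, which we take as given under the same hypotheses (induced metrics \cref{eq:induced_metric_node,eq:induced_metric_edge} and metric-compatible lifts \cref{eq:metric_lift}). Since $h$ is a finite linear combination of powers, and both sides of \cref{eq:filter_intertwining} are linear in $h$, it suffices to prove that $\mathbf{L}_\ns^{q}\mathbb{D}_0=\mathbb{D}_0\mathbf{L}_\nsrep^{q}$ for every integer $q\ge 0$.

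We prove this monomial identity by induction on $q$. For $q=0$ both sides equal $\mathbb{D}_0$, reading $\mathbf{L}^0$ as the identity of the appropriate size: $\eye{D}$ on the left and $\eye{C}$ on the right. For the inductive step, assume $\mathbf{L}_\ns^{q}\mathbb{D}_0=\mathbb{D}_0\mathbf{L}_\nsrep^{q}$. Then
\begin{equation*}
\mathbf{L}_\ns^{q+1}\mathbb{D}_0=\mathbf{L}_\ns\bigl(\mathbf{L}_\ns^{q}\mathbb{D}_0\bigr)=\mathbf{L}_\ns\mathbb{D}_0\mathbf{L}_\nsrep^{q}=\mathbb{D}_0\mathbf{L}_\nsrep\mathbf{L}_\nsrep^{q}=\mathbb{D}_0\mathbf{L}_\nsrep^{q+1},
\end{equation*}
where the second equality uses the induction hypothesis and the third uses \cref{eq:spectral_intertwining}. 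Multiplying the monomial identity by $a_q$ and summing over $q=0,\ldots,Q$ then gives \cref{eq:filter_intertwining}.

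The commutation statement follows by applying both sides to an arbitrary $\vecs\in C^0(\graph;\nsrep)$. The right-hand side gives $\mathbb{D}_0\bigl(h(\mathbf{L}_\nsrep)\vecs\bigr)$, which is filtering on the representation sheaf followed by lifting. The left-hand side gives $h(\mathbf{L}_\ns)(\mathbb{D}_0\vecs)$, which is lifting followed by filtering on the signal sheaf.

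There is essentially no obstacle. The only points to check are that the identity matrices in the constant term have the correct, different dimensions, and that the corollary inherits exactly the hypotheses of \cref{th:intertwining}. One could alternatively argue spectrally through \cref{cor:spectral_correspondence}, applying the identity $h(\mathbf{L}_\nsrep)\vecv=h(\mu)\vecv$ on each eigenvector and using that an $\mathbb{H}_0$-orthonormal eigenbasis spans $C^0(\graph;\nsrep)$. The algebraic induction is simpler, however, and avoids any discussion of diagonalizability.
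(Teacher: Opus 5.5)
Your proposal is correct and matches the paper's proof: both obtain $\mathbf{L}_{\ns}^{q}\mathbb{D}_0=\mathbb{D}_0\mathbf{L}_{\nsrep}^{q}$ by induction from the intertwining relation of \cref{th:intertwining}, then multiply by $a_q$ and sum over $q$. Your explicit handling of the $q=0$ case (identities of sizes $D$ and $C$) and your reading of the commutation claim on an arbitrary $\vecs$ are sound additions to the paper's terser argument.
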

\begin{proof}
See Supp. S2.
\end{proof}
The previous corollary highlights one of the main practical advantages of the proposed framework. 
Since every function of the sheaf Laplacian commutes with the natural transformation, spectral processing can be performed directly on the lower-dimensional representation sheaf without loss of information. 
Consequently, the computational complexity is governed by the dimensions of the representation stalks rather than those of the original signal stalks. 
More generally, the representation sheaf provides the natural computational domain for spectral processing over network sheaves, combining reduced dimensionality with complete preservation of the spectral structure.

%% file: sections/5-sheaf-sampling.tex

\section{Sampling over a network sheaf}\label{sec:sampling}

Sampling concerns the recovery of a signal from a subset of its observed values. 
In GSP, perfect recovery from samples collected at a subset of the nodes requires the signal to be bandlimited with respect to the graph Laplacian \cite{tsitsvero2016signals}. 
We extend this framework to network sheaves. 
Unlike GSP, where each sample is a scalar associated to a node, in our case a sheaf associates a vector $\x_i\in\ns(i)$ with each node. 
Consequently, sampling selects a subset of the overall $D$-dimensional $0$-cochain $\x$ and might involve the selection (or not) of some nodes and the observation of a subset of entries in each local vector $\x_i\in\ns(i)$. 

Given a sampling set $\mathcal S\subseteq\{1,\ldots,D\}$, we define the binary sampling matrix
\begin{equation}
    \boldsymbol{\Psi}_{\mathcal S}\in \{0,1\}^{|\mathcal S|\times D},    
\end{equation}
whose rows are the canonical vectors corresponding to the sampled coordinates. 
The observed signal is therefore
\begin{equation}
\label{eq:sampling_measurements}
\x_{\mathcal S}
=
\boldsymbol{\Psi}_{\mathcal S}\x\, .
\end{equation}
Equivalently, sampling can be represented through the orthogonal projector
\begin{equation}
\label{eq:sampling_projector}
\mathbf M_{\mathcal S}
=
\boldsymbol{\Psi}_{\mathcal S}^{\top}
\boldsymbol{\Psi}_{\mathcal S},
\end{equation}
while $\mathbf M_{\mathcal S}^{\perp}=\eye{D}-\mathbf M_{\mathcal S}$ projects onto the unobserved coordinates.

\subsection{Sampling and recovery of sheaf signals}
Let $\{\mu_k\}_{k=1}^{K}$ denote the distinct eigenvalues of the signal sheaf Laplacian $\mathbf L_{\ns}$, with associated eigenspaces $\{\mathcal F_k\}_{k=1}^{K}$ and $\mathbb G_0$-orthogonal projectors $\{\mathbf P_k\}_{k=1}^{K}$. 
For a frequency index set $\mathcal K\subseteq\{1,\ldots,K\}$, define the bandlimiting projector
\begin{equation}
\label{eq:bandlimiting_projector}
\mathbf B_{\mathcal K} =\sum_{k\in\mathcal K}\mathbf P_k\,.
\end{equation}
The corresponding bandlimited subspace is
\begin{equation}
\label{eq:bandlimited_subspace}
\mathcal B_{\mathcal K}
=
\operatorname{Im}(\mathbf B_{\mathcal K})\,,
\end{equation}
whose dimension is
$b_{\mathcal K}=\operatorname{rank}(\mathbf B_{\mathcal K})=\sum_{k\in\mathcal K}m_k,$
where
$m_k=\dim(\mathcal F_k)$.
A signal $\x\in C^0(\graph;\ns)$ is said to be bandlimited to the frequency set $\mathcal K$ if
\begin{equation}
\label{eq:bandlimited_signal}
\mathbf B_{\mathcal K}\x=\x\,.
\end{equation}
For computational purposes, let $\mathbf V_{\mathcal K}\in\mathbb R^{D\times b_{\mathcal K}}$ contain any $\mathbb G_0$-orthonormal basis of $\mathcal B_{\mathcal K}$.
Then
\begin{equation}
    \mathbf B_{\mathcal K}
    =
    \mathbf V_{\mathcal K}
    \mathbf V_{\mathcal K}^{\top}
    \mathbb G_0\,,    
\end{equation}
and every bandlimited signal admits the representation
\begin{equation}
\label{eq:bandlimited_expansion}
\x
=
\mathbf V_{\mathcal K}\boldsymbol{\alpha},
\end{equation}
for a unique coefficient vector $\boldsymbol{\alpha}\in\mathbb R^{b_{\mathcal K}}$.
Notice that the notion of bandlimitedness depends only on the intrinsic spectral subspace $\mathcal B_{\mathcal K}$, not on the particular orthonormal basis $\mathbf V_{\mathcal K}$ used to represent it.

\begin{theorem}[Sampling theorem for sheaf signals]
\label{thm:sheaf_sampling}
Let $\x\in\mathcal B_{\mathcal K}$ and
let $\mathbf V_{\mathcal K}\in\mathbb R^{D\times b_{\mathcal K}}$ be any $\mathbb G_0$-orthonormal basis of $\mathcal B_{\mathcal K}$, where $b_{\mathcal K}=\dim(\mathcal B_{\mathcal K})$. 
Then, $\x$ can be uniquely recovered from the samples
\begin{equation}
\x_{\mathcal S} = \boldsymbol{\Psi}_{\mathcal S}\x
\end{equation}
if and only if
\begin{equation}
\label{eq:sampling_rank}
\operatorname{rank}
\!\left(
\boldsymbol{\Psi}_{\mathcal S}
\mathbf V_{\mathcal K}
\right)
=
b_{\mathcal K},
\end{equation}
or, equivalently, 
$\label{eq:sampling_nullspace}
\mathcal B_{\mathcal K}
\cap
\ker(\boldsymbol{\Psi}_{\mathcal S})
=
\{\mathbf0\}.$
When these conditions hold, $\x$ can be uniquely recovered as
\begin{equation}
\label{eq:sampling_reconstruction}
\x
=
\mathbf V_{\mathcal K}
\left(
\boldsymbol{\Psi}_{\mathcal S}
\mathbf V_{\mathcal K}
\right)^{\dagger}
\x_{\mathcal S}.
\end{equation}
\end{theorem}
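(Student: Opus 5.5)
The plan is to reduce everything to a finite-dimensional linear-algebra statement about the coefficient vector $\boldsymbol{\alpha}$ in \cref{eq:bandlimited_expansion}. First I will check that $\mathbf V_{\mathcal K}$ has full column rank $b_{\mathcal K}$. This holds because $\mathbf V_{\mathcal K}^\top\mathbb G_0\mathbf V_{\mathcal K}=\eye{b_{\mathcal K}}$: if $\mathbf V_{\mathcal K}\boldsymbol\alpha=\zeros$, then $\boldsymbol\alpha=\mathbf V_{\mathcal K}^\top\mathbb G_0\mathbf V_{\mathcal K}\boldsymbol\alpha=\zeros$. Consequently the map $\boldsymbol\alpha\mapsto\mathbf V_{\mathcal K}\boldsymbol\alpha$ is a linear bijection $\reall^{b_{\mathcal K}}\to\mathcal B_{\mathcal K}$. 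Substituting this parametrization into \cref{eq:sampling_measurements} turns the samples into $\x_{\mathcal S}=\boldsymbol\Psi_{\mathcal S}\mathbf V_{\mathcal K}\boldsymbol\alpha$. Uniqueness of recovery is therefore uniqueness of $\boldsymbol\alpha$ solving this linear system, for every $\boldsymbol\alpha$.

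Next I will prove the equivalence. Unique recovery means the sampling operator restricted to $\mathcal B_{\mathcal K}$ is injective. Equivalently, no two distinct bandlimited signals $\x\neq\x'$ give equal samples. By linearity this says exactly that $\mathcal B_{\mathcal K}\cap\ker(\boldsymbol\Psi_{\mathcal S})=\{\zeros\}$, since $\x-\x'$ is again bandlimited. Through the bijection of the first step, this trivial intersection is equivalent to $\ker(\boldsymbol\Psi_{\mathcal S}\mathbf V_{\mathcal K})=\{\zeros\}$. By rank–nullity this is equivalent to $\operatorname{rank}(\boldsymbol\Psi_{\mathcal S}\mathbf V_{\mathcal K})=b_{\mathcal K}$. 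For the converse direction of the rank condition, note that if the rank is deficient, a nonzero $\boldsymbol\alpha$ in the kernel gives a nonzero bandlimited signal with zero samples. That signal cannot be distinguished from $\zeros$, so recovery is not unique. I will also remark that the condition does not depend on the choice of $\mathbf V_{\mathcal K}$: two $\mathbb G_0$-orthonormal bases differ by an orthogonal $b_{\mathcal K}\times b_{\mathcal K}$ factor, which preserves rank. This is consistent with the trivial-intersection form, which is basis-free.

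Finally I will derive the reconstruction formula \cref{eq:sampling_reconstruction}. When $\boldsymbol\Psi_{\mathcal S}\mathbf V_{\mathcal K}$ has full column rank, its pseudoinverse is a left inverse, $(\boldsymbol\Psi_{\mathcal S}\mathbf V_{\mathcal K})^\dagger=\bigl((\boldsymbol\Psi_{\mathcal S}\mathbf V_{\mathcal K})^\top\boldsymbol\Psi_{\mathcal S}\mathbf V_{\mathcal K}\bigr)^{-1}(\boldsymbol\Psi_{\mathcal S}\mathbf V_{\mathcal K})^\top$. This gives $(\boldsymbol\Psi_{\mathcal S}\mathbf V_{\mathcal K})^\dagger\x_{\mathcal S}=\boldsymbol\alpha$, and left-multiplying by $\mathbf V_{\mathcal K}$ returns $\x$. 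There is no real obstacle here. The only point requiring care is the first step: the basis is $\mathbb G_0$-orthonormal rather than Euclidean-orthonormal, so full column rank must be derived from the metric identity. One should also not conclude that $\mathbf V_{\mathcal K}^\top$ is a left inverse. The pseudoinverse formula sidesteps this, because it only needs full column rank of the sampled matrix.
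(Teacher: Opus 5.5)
Your proposal is correct and follows essentially the same route as the paper: parametrize $\x=\mathbf V_{\mathcal K}\boldsymbol\alpha$, reduce unique recovery to full column rank of $\boldsymbol\Psi_{\mathcal S}\mathbf V_{\mathcal K}$ (equivalently $\mathcal B_{\mathcal K}\cap\ker(\boldsymbol\Psi_{\mathcal S})=\{\mathbf 0\}$), and use the pseudoinverse as a left inverse to obtain the reconstruction formula. Two of your checks are details the paper takes for granted: that $\mathbf V_{\mathcal K}$ has full column rank because $\mathbf V_{\mathcal K}^\top\mathbb G_0\mathbf V_{\mathcal K}=\mathbf{I}_{b_{\mathcal K}}$, and that the rank condition does not depend on which $\mathbb G_0$-orthonormal basis is chosen.
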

\begin{proof}
    See Supp. S2.
\end{proof}

\begin{remark}[Geometric interpretation]
\label{rem:geometric_sampling}
The rank condition in \cref{eq:sampling_rank} holds for arbitrary metric tensors.
When the sampling projector
$\mathbf M_{\mathcal S}
=
\boldsymbol{\Psi}_{\mathcal S}^{\top}
\boldsymbol{\Psi}_{\mathcal S}$
is orthogonal with respect to the Hilbert-space inner product induced by
$\mathbb G_0$ (e.g., when $\mathbb G_0=\eye{D}$, or more generally whenever $\mathbf M_{\mathcal S}\mathbb G_0 = \mathbb G_0\mathbf M_{\mathcal S}$), the recovery condition admits the equivalent characterization
\begin{equation}
\left\|
\mathbf B_{\mathcal K}
\mathbf M_{\mathcal S}^{\perp}
\right\|_{\mathbb G_0}
<1.
\end{equation}
The operator $\mathbf B_{\mathcal K}\mathbf M_{\mathcal S}^{\perp}$ measures the largest bandlimited component that may remain hidden in the unsampled coordinates. 
Thus, the above condition guarantees that no nonzero bandlimited signal is invisible to the sampling operator. 
The proof follows \cite{tsitsvero2016signals}, replacing the Euclidean inner product with the $\mathbb G_0$-induced Hilbert product.
\end{remark}

\begin{corollary}[Minimum number of samples]
\label{cor:minimum_samples}
A necessary condition for the perfect recovery of a sheaf signal
$\x\in\mathcal B_{\mathcal K}$ is
\begin{equation}
\label{eq:minimum_samples}
|\mathcal S|
\geq
b_{\mathcal K},
\end{equation}
where
$b_{\mathcal K}
=
\dim(\mathcal B_{\mathcal K})$
is the bandwidth of the signal.
\end{corollary}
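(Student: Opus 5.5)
The plan is to derive the bound directly from the rank characterization in \cref{thm:sheaf_sampling}, which gives necessary and sufficient conditions for perfect recovery. Assume that every $\x\in\mathcal B_{\mathcal K}$ can be uniquely recovered from $\x_{\mathcal S}=\boldsymbol{\Psi}_{\mathcal S}\x$. By \cref{thm:sheaf_sampling}, this forces
\begin{equation*}
\operatorname{rank}\!\left(\boldsymbol{\Psi}_{\mathcal S}\mathbf V_{\mathcal K}\right)=b_{\mathcal K}
\end{equation*}
for any $\mathbb G_0$-orthonormal basis $\mathbf V_{\mathcal K}$ of $\mathcal B_{\mathcal K}$.

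The key step is the elementary dimension bound on the rank of a matrix. The product $\boldsymbol{\Psi}_{\mathcal S}\mathbf V_{\mathcal K}$ has size $|\mathcal S|\times b_{\mathcal K}$, so its rank is at most $\min(|\mathcal S|,b_{\mathcal K})\le|\mathcal S|$. Combining this with the rank condition gives $b_{\mathcal K}\le|\mathcal S|$, which is \eqref{eq:minimum_samples}.

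As an alternative, one can argue from the nullspace form. The map $\boldsymbol{\Psi}_{\mathcal S}$ restricted to $\mathcal B_{\mathcal K}$ sends a $b_{\mathcal K}$-dimensional space into $\reall^{|\mathcal S|}$. If $|\mathcal S|<b_{\mathcal K}$, rank--nullity gives a nonzero $\x\in\mathcal B_{\mathcal K}\cap\ker(\boldsymbol{\Psi}_{\mathcal S})$. That signal is indistinguishable from $\mathbf 0$, so recovery is not unique.

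There is no real obstacle here. The only point needing care is that the necessity direction of \cref{thm:sheaf_sampling} is being invoked. Since the bound depends only on $\dim\mathcal B_{\mathcal K}$, it is also independent of the choice of basis $\mathbf V_{\mathcal K}$ and of the metric $\mathbb G_0$.
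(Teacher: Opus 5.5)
Your proposal is correct and matches the paper's proof: it invokes the rank condition $\operatorname{rank}(\boldsymbol{\Psi}_{\mathcal S}\mathbf V_{\mathcal K})=b_{\mathcal K}$ from the sampling theorem and notes that a matrix with $|\mathcal S|$ rows has rank at most $|\mathcal S|$. Your alternative rank--nullity argument is also valid and does not rely on the theorem at all.
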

\begin{proof}
See Supp. S2.
\end{proof}

It is worth to point out that the condition in \cref{eq:minimum_samples} is necessary but not sufficient:
the sampled coordinates must also be suitably located so that $\boldsymbol{\Psi}_{\mathcal S}\mathbf V_{\mathcal K}$ has full column rank. 
This distinction is especially important for sheaf signals: 
two sampling sets with the same cardinality may have very different recovery properties because they may select different nodes and different entries within the corresponding stalks.

\subsection{Band splitting induced by the natural transformation $\delta$}
When the signal sheaf is induced by a representation sheaf through a natural transformation, the local signal model induces a natural decomposition of every spectral band. 
This decomposition reveals that only the model-aligned component needs to be recovered whenever the signal is known to belong to the representation subspace. 
Let $\boldsymbol\Pi_\delta:=\mathbb D_0\mathbb H_0^{-1}\mathbb D_0^\top\mathbb G_0$ denote the $\mathbb G_0$-orthogonal projector onto $\im{\mathbb D_0}$.
The projector is well defined whenever $\ns$ arises, via $\delta:\nscoeff\Rightarrow\ns$, as the lift of the representation sheaf $\nscoeff$ (\cref{th:intertwining}).
As we show next, $\boldsymbol\Pi_\delta$ then reveals a natural, model-aligned splitting of every band.

\begin{corollary}[$\delta$-informed splitting of the band]
\label{cor:parsimonious_lift}
Every band $\mathcal B_\mathcal K$ splits $\mathbb G_0$-orthogonally as
\begin{equation}\label{eq:band-splitting}
\begin{aligned}
&\mathcal{B}_{\mathcal{K}}=\im{\mathbf B_\mathcal K^\parallel} \oplus \im{\mathbf B_\mathcal K^\perp}\,,\\
\text{where }&
\mathbf B_\mathcal K^\parallel:=\boldsymbol\Pi_\delta\mathbf B_\mathcal K\,, \quad \mathbf B_\mathcal K^\perp:=(\eye{}-\boldsymbol\Pi_\delta)\mathbf B_\mathcal K\,;    
\end{aligned}
\end{equation}
into a component aligned with the local signal model $\x_i=\D_i\vecs_i$, and one orthogonal to it.
The aligned component admits the explicit form
\begin{equation}
\mathrm{Im}(\mathbf B_\mathcal K^\parallel)=\mathrm{Im}(\mathbb D_0\mathbf U_\mathcal K)\,,
\end{equation}
where $\mathbf{U}_\mathcal K$ collects $\mathbb H_0$-orthonormal bases $\mathbf{U}_k$ of the eigenspaces $\mathcal E_k$ of $\mathbf L_\nscoeff$, $k\in\mathcal K$, with $m'_k\coloneqq\dim(\mathcal E_k)\leq m_k$.
\end{corollary}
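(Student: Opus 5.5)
The plan rests on one fact: $\boldsymbol\Pi_\delta$ commutes with $\mathbf L_\ns$, hence with every spectral projector $\mathbf P_k$ and with $\mathbf B_\mathcal K$.

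\emph{Step 1: $\boldsymbol\Pi_\delta$ is a $\mathbb G_0$-orthogonal projector.} Since $\mathbb H_0=\mathbb D_0^\top\mathbb G_0\mathbb D_0$, we get $\boldsymbol\Pi_\delta^2=\boldsymbol\Pi_\delta$. Moreover $\mathbb G_0\boldsymbol\Pi_\delta$ is symmetric, so $\boldsymbol\Pi_\delta$ is $\mathbb G_0$-selfadjoint, and its image is $\im{\mathbb D_0}$.

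\emph{Step 2 (main step): $\boldsymbol\Pi_\delta\mathbf L_\ns=\mathbf L_\ns\boldsymbol\Pi_\delta$.} By \cref{th:intertwining}, $\mathbf L_\ns\mathbb D_0=\mathbb D_0\mathbf L_\nscoeff$, so $\im{\mathbb D_0}$ is $\mathbf L_\ns$-invariant. Because $\mathbf L_\ns$ is $\mathbb G_0$-selfadjoint, the $\mathbb G_0$-orthogonal complement is also invariant: if $\mathbf y\perp\im{\mathbb D_0}$, then $\langle\mathbf L_\ns\mathbf y,\mathbb D_0\vecv\rangle=\langle\mathbf y,\mathbf L_\ns\mathbb D_0\vecv\rangle=\langle \mathbf y,\mathbb D_0\mathbf L_\nscoeff\vecv\rangle=0$. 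An operator that leaves both a subspace and its orthogonal complement invariant commutes with the orthogonal projector onto that subspace.

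A direct matrix check is also available. It uses the identity $\mathbb D_0^\top\mathbb G_0\mathbf L_\ns=\mathbf L_\nscoeff^{\top}\text{-type}$ relations, obtained by transposing the intertwining with $\mathbb G_0,\mathbb H_0$. The invariant-subspace argument avoids that bookkeeping, so it is the route I will use.

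\emph{Step 3: commutation with the spectral projectors.} Each $\mathbf P_k$ is a polynomial in $\mathbf L_\ns$ (Lagrange interpolation on the distinct eigenvalues $\mu_k$). Hence $\boldsymbol\Pi_\delta$ commutes with every $\mathbf P_k$ and with $\mathbf B_\mathcal K$. It follows that $\mathbf B_\mathcal K^\parallel=\boldsymbol\Pi_\delta\mathbf B_\mathcal K$ and $\mathbf B_\mathcal K^\perp=(\eye{}-\boldsymbol\Pi_\delta)\mathbf B_\mathcal K$ are products of commuting $\mathbb G_0$-orthogonal projectors, and therefore are themselves $\mathbb G_0$-orthogonal projectors. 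They sum to $\mathbf B_\mathcal K$, their product is zero, and both images lie in $\mathcal B_\mathcal K$. Writing $\x=\mathbf B^\parallel_\mathcal K\x+\mathbf B^\perp_\mathcal K\x$ for $\x\in\mathcal B_\mathcal K$ gives the direct sum. The two images are $\mathbb G_0$-orthogonal because one lies in $\im{\mathbb D_0}$ and the other in its complement.

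\emph{Step 4: the explicit form of the aligned component.} We have $\im{\mathbf B^\parallel_\mathcal K}=\mathcal B_\mathcal K\cap\im{\mathbb D_0}=\bigoplus_{k\in\mathcal K}(\mathcal F_k\cap\im{\mathbb D_0})$, again using commutation. By \cref{cor:spectral_correspondence}, $\mathbb D_0\mathcal E_k\subseteq\mathcal F_k$. Conversely, let $\x=\mathbb D_0\vecv\in\mathcal F_k$. Then $\mathbb D_0(\mathbf L_\nscoeff-\mu_k)\vecv=(\mathbf L_\ns-\mu_k)\mathbb D_0\vecv=0$. Since $\mathbb D_0$ is injective (its blocks are Stiefel), this gives $\vecv\in\mathcal E_k$. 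Here $\mathcal E_k$ is understood as the possibly trivial eigenspace of $\mathbf L_\nscoeff$ at $\mu_k$. Thus $\mathcal F_k\cap\im{\mathbb D_0}=\mathbb D_0\mathcal E_k$, and summing over $k\in\mathcal K$ gives $\im{\mathbb D_0\mathbf U_\mathcal K}$. Injectivity also yields $m'_k=\dim\mathbb D_0\mathcal E_k\le m_k$.
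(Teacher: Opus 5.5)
Your proposal is correct and follows the paper's route. The paper argues the same way: $\mathrm{Im}(\mathbb D_0)$ and its $\mathbb G_0$-orthogonal complement are both invariant under $\mathbf L_{F}$ (by the intertwining and $\mathbb G_0$-self-adjointness), so $\boldsymbol\Pi_\delta$ commutes with the $\mathbf P_k$ and with $\mathbf B_\mathcal K$, and injectivity of $\mathbb D_0$ then gives the explicit form. Your version is more complete than the paper's sketch: you check that $\boldsymbol\Pi_\delta$ is a $\mathbb G_0$-orthogonal projector, justify commutation with each $\mathbf P_k$ via Lagrange interpolation, and prove the reverse inclusion $\mathcal F_k\cap\mathrm{Im}(\mathbb D_0)\subseteq\mathbb D_0\mathcal E_k$, which the paper only asserts.
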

\begin{proof}
    See Supp. S2.
\end{proof}

Interestingly, \cref{cor:parsimonious_lift} shows that, whenever the signal of interest is known to obey the local model, i.e., to lie in $\im{\mathbb D_0}$, exact recovery does not require resolving the whole band $\mathcal B_\mathcal K$, but only its aligned component $\mathrm{Im}(\mathbf B_\mathcal K^\parallel)$: 
since $m'_k\le m_k$ for every $k\in\mathcal K$ (often strictly, and possibly $m'_k=0$) the aligned part can have dimension considerably smaller than the nominal bandwidth $b_\mathcal K$.
In practice, \cref{thm:sheaf_sampling} can then be applied with $\mathbf U=\mathbb D_0\mathbf U_\mathcal K$ in place of $\mathbf V_\mathcal K$, targeting $\mathrm{Im}(\mathbf B_\mathcal K^\parallel)$: 
the effective bandwidth to be resolved shrinks from $b_\mathcal K$ to $\sum_{k\in\mathcal K}m'_k$, and so does the number of samples in \cref{cor:minimum_samples} sufficient for perfect recovery.
\cref{subsec:app_sampling} illustrates this reduction empirically.

Before concluding this section, it is worth to point out that, unlike GSP, where recoverability depends only on the graph topology, in network sheaves it is jointly determined by the graph topology, the restriction maps, and the metric tensors defining the sheaf geometry. 
\cref{thm:sheaf_sampling} therefore establishes a direct connection between the algebraic structure of the sheaf and the recoverability of bandlimited signals.

\subsection{Sampling Strategy}
\label{sec:sampling_strategies}
The recovery condition of \cref{cor:minimum_samples} characterizes when a given $\mathcal{S}$ enables perfect recovery, but does not prescribe how to choose it.
Let $\mathbf V\in\reall^{D\times b}$ denote any $\mathbb G_0$-orthonormal basis of a target subspace $\mathcal B\subseteq C^0(\graph;\ns)$, $b=\dim\mathcal B$. 
By default $\mathbf V=\mathbf V_\mathcal K$ and $b=b_\mathcal K$, the full band of \cref{thm:sheaf_sampling}, but $\mathbf V$ may equally be taken as the $\delta$-aligned basis $\mathbb D_0\mathbf U_\mathcal K$ of \cref{cor:parsimonious_lift}, with $b=b_\mathcal K^\parallel:=\sum_{k\in\mathcal K}m_k'=\dim{\im{\mathbf B_\mathcal K^\parallel}}$, whenever the signal is known to lie in $\im{\mathbb{D}_0}$.
The design problem is to find a minimum-size sampling set $\mathcal S$ such that $\mathrm{rank}(\boldsymbol{\Psi}_\mathcal{S}\mathbf V) = b$, which is NP-hard in general~\cite{dilorenzo2018sampling}.

\spara{Greedy algorithm.}
Given as minimum sampling budget the signal bandwidth $b$ (cf. \cref{cor:minimum_samples}), we seek the sampling set that maximizes the rank of the sampling operator restricted to the bandlimited subspace (cf. \cref{eq:sampling_rank}):
\begin{equation}
\begin{aligned}
\max_{\mathcal S\subseteq\mathcal I}\quad
& \mathrm{rank}(\boldsymbol{\Psi}_{\mathcal S}\mathbf V)\\
\text{s.t.}\quad
& |\mathcal S|=b\,;
\end{aligned}
\label{eq:sampling_optimization}
\end{equation}
where $\mathcal I$ denotes the set of all sheaf signal entries. Defining
$f(\mathcal S)=\mathrm{rank}(\boldsymbol{\Psi}_{\mathcal S}\mathbf V)$,
the objective is the rank function of a linear matroid, and is therefore monotone and submodular~\cite{nemhauser1978analysis}. Consequently, a greedy algorithmic approach provides a $(1-1/e)$-approximation to the optimal solution.

\spara{Tie-breaking.}
Identify each global coordinate in $\{1,\ldots,D\}$ with a pair $(i,k)$, $i\in\vertexset$, $k=1,\ldots,d_i$, via the stalk offsets, and let $\widetilde{\mathbf V}:=\mathbb G_0^{1/2}\mathbf V$, which is Euclidean-orthonormal since $\widetilde{\mathbf V}^\top\widetilde{\mathbf V}=\mathbf V^\top\mathbb G_0\mathbf V=\eye{b}$.
Among all pairs that increase the rank, ties are broken by the row norm
\begin{equation}
  \eta_{i,k} = \bigl\|\mathbf{e}_k^\top\widetilde{\mathbf V}[\mathrm{stalk}_i,:]\bigr\|_2,
  \label{eq:tiebreak}
\end{equation}
which measures how much of the target subspace's energy raw coordinate $k$ at node $i$ captures.
Crucially, the atoms of selection are always \emph{raw stalk coordinates}, never dictionary atoms: $\boldsymbol{\Psi}_\mathcal{S}$ remains the plain binary sampling matrix of \cref{eq:sampling_measurements,eq:sampling_projector} throughout, regardless of which target basis $\mathbf V$ is used.
\Cref{alg:greedy_sampling} summarizes the greedy sampling strategy.

\begin{algorithm}[t]
  \caption{Greedy Sampling for Bandlimited Sheaf Signals}
  \label{alg:greedy_sampling}
  \begin{algorithmic}
    \REQUIRE $\mathbb G_0$-orthonormal basis $\mathbf V\in\reall^{D\times b}$ of the target subspace $\mathcal B$
             (e.g.\ $\mathbf V_\mathcal K$, \cref{thm:sheaf_sampling}, or $\mathbb D_0\mathbf U_\mathcal K$, \cref{cor:parsimonious_lift})
    \ENSURE  Sampling set $\mathcal{S}$, operator $\boldsymbol{\Psi}_\mathcal{S}$
    \STATE $\mathcal{S} \leftarrow \emptyset$,\quad $r \leftarrow 0$,\quad
           $\mathcal{C} \leftarrow \{(i,k) : i \in \vertexset,\,k = 1,\ldots,d_i\}$,\quad
           $\widetilde{\mathbf V}\leftarrow\mathbb G_0^{1/2}\mathbf V$
    \STATE $\eta_{i,k} \leftarrow \|\mathbf{e}_k^\top\widetilde{\mathbf V}[\mathrm{stalk}_i,:]\|_2$
          \textbf{for all} $(i,k) \in \mathcal{C}$
    \WHILE{$r < b$}
      \STATE $\mathcal{C}^+ \leftarrow \{(i,k) \in \mathcal{C} \setminus \mathcal{S} :
             \mathrm{rank}(\boldsymbol{\Psi}_{\mathcal{S}\cup\{(i,k)\}}
             \mathbf V) > r\}$
      \IF{$\mathcal{C}^+ = \emptyset$}
        \STATE \textbf{break}
      \ENDIF
      \STATE $(i^*,k^*) \leftarrow \arg\max_{(i,k)\in\mathcal{C}^+} \eta_{i,k}$
      \STATE $\mathcal{S} \leftarrow \mathcal{S} \cup \{(i^*,k^*)\}$,\quad $r \leftarrow r+1$
    \ENDWHILE
    \RETURN $\mathcal{S}$,\quad $\boldsymbol{\Psi}_\mathcal{S}\in\{0,1\}^{|\mathcal{S}|\times D}$ (\cref{eq:sampling_measurements})
  \end{algorithmic}
\end{algorithm}

%% file: sections/6-applications.tex

\section{Empirical assessment}\label{sec:empirical-assessment}

Our framework is validated through three numerical examples. 
The first investigates filtering on synthetic data, the second demonstrates the proposed framework on a real-world multi-view motion capture dataset, and the third shows an application of sampling for financial portfolio reconstruction.

\subsection{Filtering signals from correlated latent factors}\label{subsec:app_filtering_synth}
We consider the network sheaf topology shown in \cref{fig:sheaf_topo_synth}: 
two cliques of $5$ nodes each, connected by $3$ bridge edges.
Each clique $k\in\{1,2\}$ is associated with a latent factor $\y_k = C_k\m + \boldsymbol{\varepsilon}_k$ in $\reall^r$ ($r=10$), where $\m \sim \mathcal{N}(\zeros, \eye{r})$ is a confounder shared across both cliques, inducing correlation between $\y_1$ and $\y_2$, and $\boldsymbol{\varepsilon}_k\sim \mathcal{N}(\zeros, \eye{r})$ is an idiosyncratic component specific to clique $k$, independent of $\m$ and of $\boldsymbol{\varepsilon}_{k'}$ for $k'\neq k$.
The scalars $C_1, C_2 \in[0,1]$ control the correlation between the two factors. 
For simplicity, we set $C_1=C_2=C$.
Each node $i$ carries a raw signal $\x_i = \myH_i\y_{k(i)}$, where $k(i)\in\{1,2\}$ denotes the clique containing $i$ (i.e., $k(i)=1$ for $i\in\{0,\dots,4\}$ and $k(i)=2$ for $i\in\{5,\dots,9\}$), and $\myH_i\in\reall^{d\times r}$ ($d=20$) is a random linear embedding, drawn independently across nodes with no correlation between them.

On intra-clique edges, the consistency condition requires both endpoints to recover the same latent factor $\y_k$, giving $\restrictionmap{i}{e} = \mathrm{pinv}(\myH_i) \in \reall^{r\times d}$. 
On bridge edges, the consistency condition instead requires the two endpoints to agree on the MMSE estimate of the shared component $C\m$, yielding $\restrictionmap{i}{e} = \beta\,\mathrm{pinv}(\myH_i)$, with $\beta = \frac{C^2}{C^2+1} = \mathrm{Cov}(C\m,\y_i)/\mathrm{Var}(\y_i)$ the optimal linear regression coefficient for estimating $C\m$ from $\y_i$.
The representation sheaf $\nscoeff$ is then obtained via $\restrictionmapscoeff{i}{e} = \restrictionmap{i}{e}\D_i$, where $\D_i\in\stiefel{d}{c}$ ($c=10$) is a local DCT dictionary, ensuring naturality by construction. 
From the restriction maps $\restrictionmapscoeff{i}{e}$, we assemble the sheaf Laplacian $\mathbf{L}_\nscoeff$ and compute its eigendecomposition.
Let $\{\mu_k\}$ and $\{\mathcal{E}_k\}$ denote the distinct eigenvalues and eigenspaces of $\mathbf{L}_\nscoeff$.
A bandlimited test signal is synthetized from $K_\mathrm{sig}=15$ selected eigenspaces using a decaying energy profile $\{w_k\}$:
\begin{equation}\label{eq:ex_signal_generation}
    \vecs = \sum_{k \in \mathcal{K}} w_k \, \U_k \boldsymbol{\alpha}_k\,,
    \qquad \boldsymbol{\alpha}_k \sim \mathrm{Uniform}\bigl(\mathcal{S}^{|\mathcal{E}_k|-1}\bigr)\,,
\end{equation}
where $\mathcal{K}$ is the index set of the $K_\mathrm{sig}$ selected eigenspaces and $\mathcal{S}^{|\mathcal{E}_k|-1}$ denotes the unit sphere in $\reall^{|\mathcal{E}_k|}$, ensuring basis-invariance within each eigenspace.
By construction, $\vecs$ is bandlimited with respect to $\mathbf{L}_\nscoeff$---its energy is supported on $\mathcal{K}$---but is not in general a global section of $\nscoeff$, as $\mathcal{K}$ need not be restricted to the kernel eigenspaces $\{\mu_k = 0\}$.
The signal is then lifted to the raw domain stalk-wise via $\x_i = \D_i\vecs_i$.

\begin{figure}[t]
    \centering
    \includegraphics[width=0.9\linewidth]{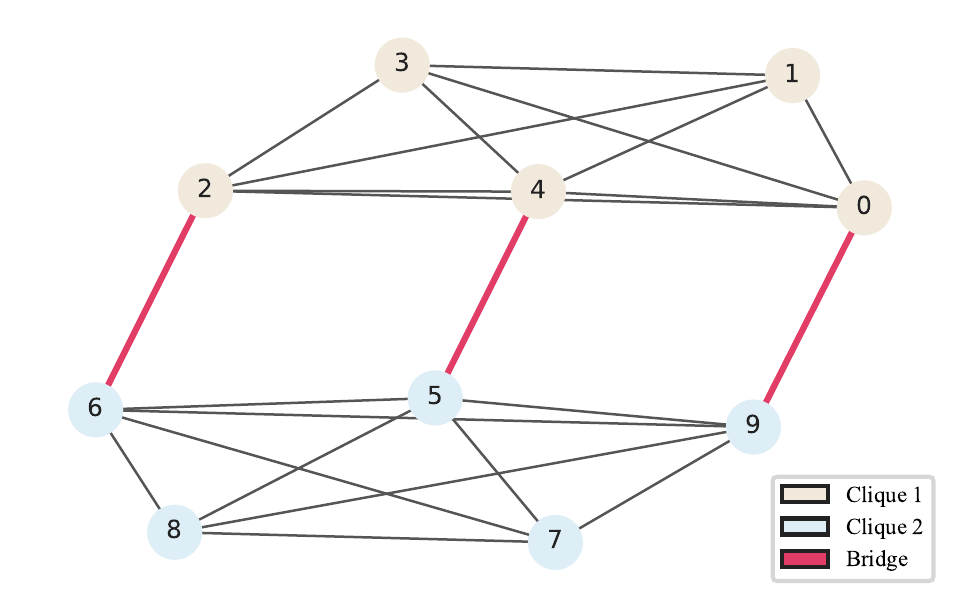}
    \caption{Sensor network topology of the filtering experiment.}
    \label{fig:sheaf_topo_synth}
\end{figure}

Once we generate a set of bandlimited signals, we corrupt them with addiditve white Gaussian noise at different signal-to-noise ratio (SNR) levels. 
We aim at assessing the filtering capabilities of the two sheaf representations $\mathbf{L}_J$ and $\mathbf{L}_F$.
Thus, we implement spectral filtering via hard-thresholding: 
we rank the eigenspaces by the signal's energy in each, then project the signal onto the highest-ranked ones. 
The number of eigenspaces is chosen by exploring the tradeoff between bandwidth and reconstruction error, and reporting the optimal value.
The filtering is implemented in the representation domain, so that the reconstruction error is computed after lifting to the signal domain. 
We compare the performance of filtering with respect to $\mathbf{L}_J$ with \emph{(i)} filtering in the raw domain with $\mathbf{L}_F$, \emph{(ii)} filters based only on local dictionaries projection, and \emph{(iii)} a GSP baseline using a graph Laplacian learned from additional noiseless training signals with the algorithm from~\cite{dong2016learning}. 
For the latter, we consider two variants to ensure a fair comparison in terms of filtering-complexity:
\emph{(i)} \emph{GLSigRep-InterOnly}, where intra-stalk edges are forbidden,
and \emph{(ii)} \emph{GLSigRep-TopoMasked}, where inter-stalk edges are also constrained to adhere to the topology in \cref{fig:sheaf_topo_synth}.

\begin{figure}[t]
    \centering
    \includegraphics[width=1\linewidth]{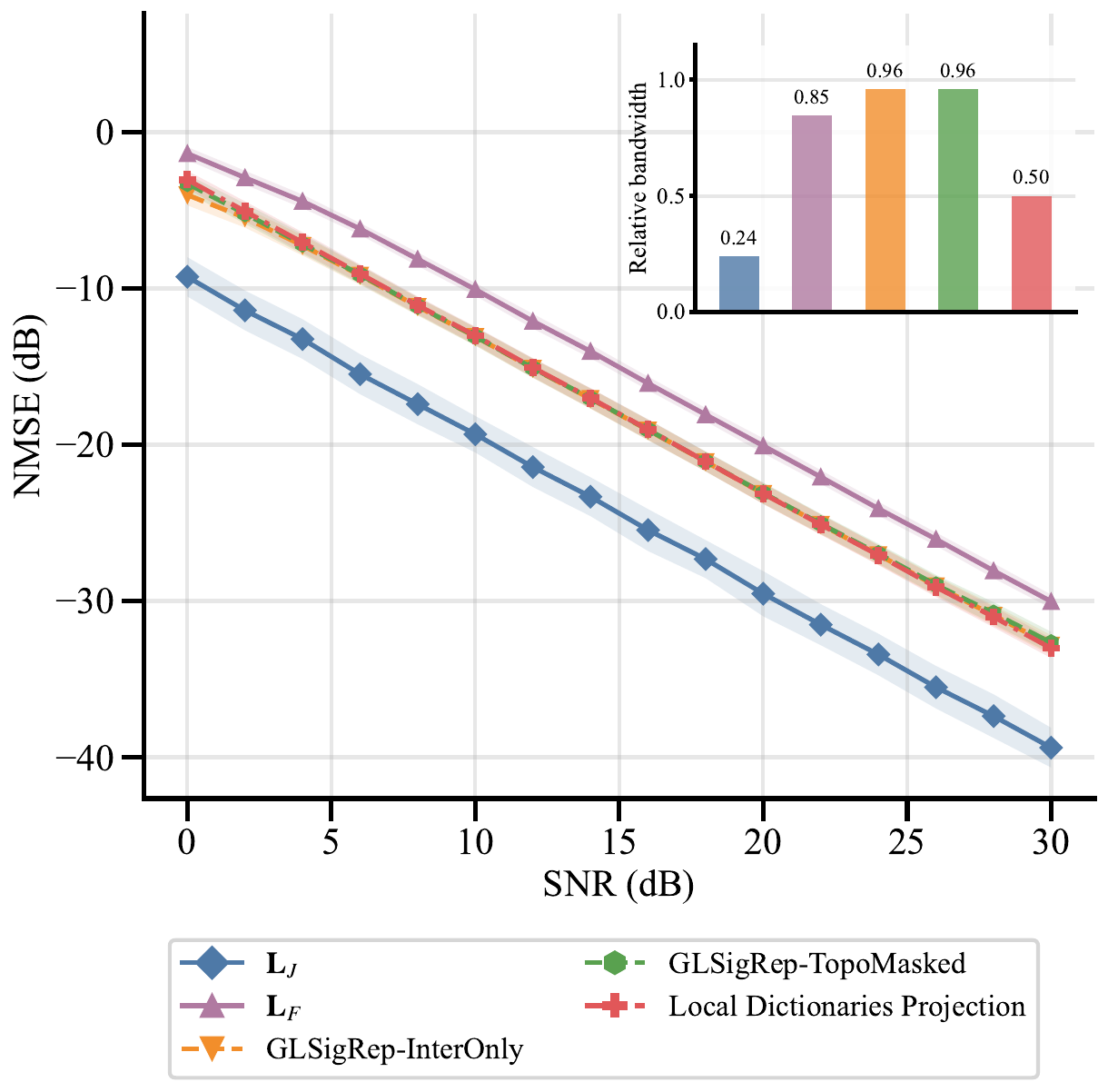}
    \caption{Reconstruction error (left panel) and fraction of modes at the optimal bandwidth (top-right corner).}
    \label{fig:synth_filtering}
\end{figure}

\Cref{fig:synth_filtering} reports the NMSE against the SNR for all considered approaches, with the top-right inset showing each method's optimal relative bandwidth (i.e., the fraction of eigenvectors retained). 
The representation sheaf $\mathbf{L}_J$ strongly outperforms every baseline, both in NMSE and filter sparsity. 
The GSP baselines fail to localize the signal spectrally:
they require a much larger bandwidth, yet their NMSE does not improve on simply projecting onto the local dictionaries. 
The signal sheaf $\mathbf{L}_F$ likewise fails to capture the signal's bandlimitedness, since the spectral intertwining of \cref{th:intertwining} does not hold here: 
the restriction maps of \ns are not derived from \nscoeff, but the converse. 
Filtering with the two sheaves thus yields different results, with \ns performing significantly worse.

\subsection{Denoising inter-frame displacement in CMU Panoptic}
\label{sec:panoptic}
We consider the CMU Panoptic Studio dataset\footnote{\url{http://domedb.perception.cs.cmu.edu}}, sequence \texttt{171204\_pose1}, which contains synchronized recordings of a human subject observed by a calibrated multi-camera system. 
For each frame, the dataset provides the $3$D coordinates of the $19$ body joints (COCO19 format, world coordinates in cm) and the corresponding synchronized observations from $31$ calibrated HD cameras. 
Each camera $c$ is described by its intrinsic calibration 
$\mathbf{K}_c$ and extrinsic parameters $(\mathbf{R}_c,\mathbf{t}_c)$, defining the nonlinear projection $\pi_c(\mathbf{K}_c,\mathbf{R}_c,\mathbf{t}_c):\mathbb{R}^3\rightarrow\mathbb{R}^2$. 
The signal of interest is the inter-frame displacement of the body joints between two consecutive frames $t$ and $t+\Delta$.

The graph is instantiated at each reference frame $t$ and comprises two distinct sets of nodes. The first consists of one node for each skeleton joint $j$, with stalk $\ns(j)\cong\mathbb{R}^{3}$ carrying the corresponding $3$D inter-frame displacement $\mathbf{d}_j(t,\Delta)=\mathbf{x}_j(t+\Delta)-\mathbf{x}_j(t)$. 
The second consists of one node for each visible joint-camera pair $(c,j)$, with stalk $\ns(c,j)\cong\mathbb{R}^{2}$ carrying the corresponding image displacement $\mathbf{u}_{cj}=\pi_c(\mathbf{x}_j(t+\Delta))-\pi_c(\mathbf{x}_j(t))$. 
Three families of edges are considered: 
\emph{bone edges}, connecting adjacent joints in the kinematic skeleton; 
\emph{projection edges}, linking each joint $j$ to its image observations $(c,j)$; 
and \emph{view edges}, connecting observations $(c_1,j)$ and $(c_2,j)$ of the same joint across different cameras. 
Bone edges $e_b=(j_1,j_2)$ are associated with the edge stalk $\ns(e_b)\cong\mathbb{R}^{3}$ and restriction maps $\mathbf{F}_{j_1\trianglelefteq e_b}=\mathbf{F}_{j_2\trianglelefteq e_b}=\mathbf{I}_3$, so that consistency enforces locally translational motion between adjacent joints in the canonical coordinate system. 
Projection edges $e_p=(j,(c,j))$ are instead frame-dependent. 
They are associated with the edge stalk $\ns(e_p)\cong\mathbb{R}^{2}$, the restriction map $\mathbf{F}_{j\trianglelefteq e_p}=\T_{cj}=\left.\frac{\partial \pi_c}{\partial \mathbf{x}}\right|_{\mathbf{x}_j(t)}\in\mathbb{R}^{2\times 3}$ obtained by linearizing the camera projection around the reference joint position $\mathbf{x}_j(t)$, and $\mathbf{F}_{(c,j)\trianglelefteq e_p}=\mathbf{I}_2$. 
Since $\ker\left(\mathbf{F}_{j\trianglelefteq e_p}\right)$ coincides with the viewing ray $\mathbf{r}_{cj}$ of camera $c$ passing through $\mathbf{x}_j(t)$, a single image observation determines the $3$D displacement only up to its component along the viewing direction. 
To encode the complementary information provided by multiple cameras, view edges connect observations of the same joint across different cameras. 
These edges are associated with a one-dimensional stalk $\ns(e)\cong\mathbb{R}$ and restriction maps
$\mathbf{F}_{(c_1,j)\trianglelefteq e}=\mathbf{m}^{\top}\T_{c_1j}\T_{c_2j}^{\dagger}$ and
$\mathbf{F}_{(c_2,j)\trianglelefteq e}=\mathbf{m}^{\top}$,
where $\mathbf{m}\perp\mathbf{e}_{c_2}=\T_{c_2j}\mathbf{r}_{c_1j}$ and $\|\mathbf{m}\|=1$. 
Up to a scaling factor, this is the unique linear relation between the two image displacements that is satisfied by every underlying $3$D displacement, and it degenerates only when the two camera centers are collinear with the observed joint. 
The reduced edge dimension ($d_e=1<d_i=d_j=2$) also illustrates the interpretation of low-dimensional edge stalks as compression bottlenecks discussed in \cref{sec:primer}. 
Finally, the $3$D and image displacement signals are normalized independently to unit norm before constructing the sheaf signals.

\begin{figure}
    \centering
    \includegraphics[width=1\linewidth]{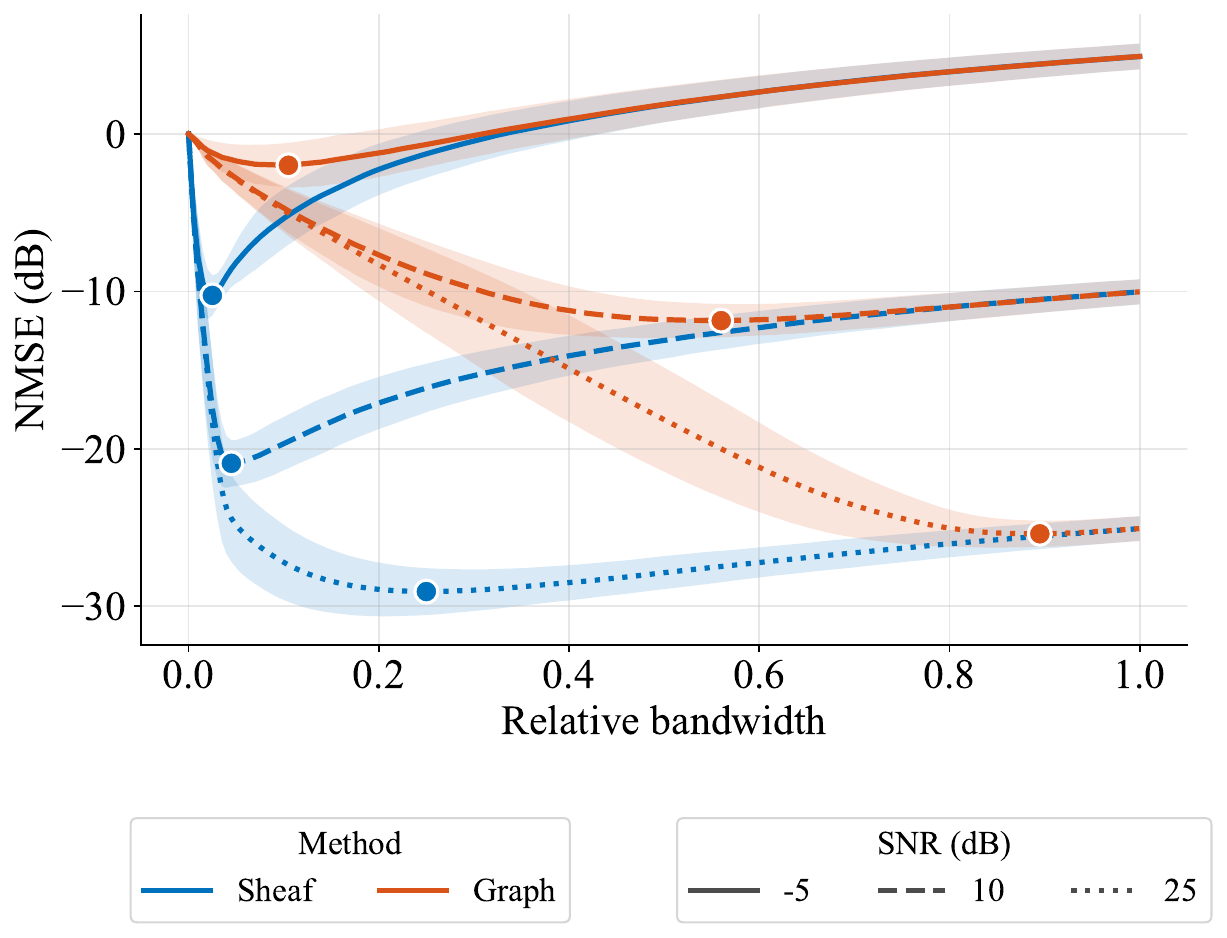}
    \caption{NMSE versus relative bandwidth, for different SNR and methods, over the Panoptic CMU experiment.}
    \label{fig:panoptic}
\end{figure}

We compare the sheaf Laplacian $\mathbf{L}_F$ against an isotropic, GSP-like operator acting on the joint coordinates. 
This baseline is the same network used in the sheaf construction, restricted to the nodes representing the skeletal joints and stripped of the camera-view enrichment layer; the operator can only enforce local translational consistency. 
The comparison isolates denoising performance on the signal of interest: 
the joint displacement between consecutive frames.
\Cref{fig:panoptic} shows the NMSE versus the relative bandwidth for top-$k$ spectral filtering at different SNR levels when using the signal sheaf \ns and the GSP baseline, averaged over $8$ noise realizations and $100$ frames.
Since the sheaf depends on the reference frame, it is reconstructed at each frame by linearizing the camera projections around the current joint positions.
\ns consistently achieves a lower NMSE than the GSP baseline across all bandwidth and SNR values, reflecting a more compact spectral representation of the signal. 
Indeed, it allows exploiting multi-view redundancy via the camera projections encoded in its restriction maps, resulting in more robust denoising.

\subsection{Recovering financial portfolios from sampled observations}\label{subsec:app_sampling}
We consider a set $\mathcal{U}$ of financial stocks\footnote{The constituents of S\&P~100 Index.}, with $|\mathcal{U}|=101$.
Five investors, labeled from $A$ to $E$, buy and sell subsets of the stocks in $\mathcal{U}$. 
Denote by $\mathcal{U}_i \subset \mathcal{U}$ the subset corresponding to the $i$-th investor, consisting of $d_i$ stocks, and let the $k$-th entry of a vector $\x_i \in \reall^{d_i}$ be the fraction of the capital invested by investor $i$ corresponding to the stock $k \in \mathcal{U}_i$. 
Further, the above subsets have heterogeneous size: 
$d_A=24,\, d_B=30,\, d_C=52,\, d_D=71,\, d_E=21$. 
We then define a graph \graph with five nodes $\vertexset=\{A, B, C, D, E\}$, and with line topology $A-B-E-D-C$. 
For each edge $e=(i,j)$ of \graph, $\mathcal{U}_i \cap \mathcal{U}_j \neq \emptyset$. 
In addition, $\mathcal{U}_A$ and $\mathcal{U}_B$ share no stocks with either $\mathcal{U}_C$ or $\mathcal{U}_D$, thus $\mathcal{U}_E$ acts as a bridge between these two clusters.

Now, we build the sheaf \ns by defining the node stalks as $\ns(i)\cong\reall^{d_i}$, where the node signals are those vectors $\x_i$, $i \in \vertexset$.
Thus, the $0$-cochain is $\x \in \reall^{198}$.
Regarding the edge stalks, for each $e=(i,j)$ with $d_i<d_j$, we let $\ns(e)\equiv \ns(j)$.
For instance, for $e=(C,D)$, $\ns(e)\equiv \ns(D)$. The restriction maps compare two neighboring investors over the stocks they hold in common; we make this precise below, once we introduce the data-driven dictionaries, since that is how these maps are actually constructed. 
Each $\x_i$ can be equivalently represented over a dictionary $\D_i \in \stiefel{d_i}{c}$, whose columns represent principal components--in the financial domain interpreted as \emph{statistical risk factors}--explaining most of the variance of the data (here we set $c=5$). 
We build the dictionaries $\D_i$ according to standard statistical factor models~\cite{connor1986performance} using the time series of daily returns of the $101$ stocks over the period from Jan $1^{\text{st}}$, $2022$ to Dec $31^{\text{st}}$, $2024$, gathered from Yahoo Finance.
The local signal models are thus $\x_i = \D_i\vecs_i$, where $\vecs_i$ is the investor-specific representation of the fractions of invested money in $\x_i$ over the $5$ statistical risk factors.

According to the local signal models and to the constructed signal sheaf \ns, the node stalks of the representation sheaf \nscoeff are $\nscoeff(i)\cong\reall^c$, $i \in \vertexset$, with valuations corresponding to $\vecs_i$. 
Thus, the $0$-cochain is $\vecs \in \reall^{25}$.
Similarly to \ns, for each edge, the edge stalk of \nscoeff coincides with the representation stalk of the same reference node $j$ already chosen for that edge in \ns, so that $\D_e=\D_j$ and $\restrictionmapscoeff{j}{e}=\eye{c}$.
Since $\D_i$ and $\D_j$ correspond to different stock subsets, we compare two investors' representations through the stocks they hold in common. 
Specifically, we pad each dictionary with zero rows to the full set $\mathcal U$, obtaining $\widetilde{\D}_i$ and $\widetilde{\D}_j$.
Then, we set representation-level restriction map $\restrictionmapscoeff{i}{e}=\widetilde{\D}_j^\top \widetilde{\D}_i$.
The restriction maps of \ns are obtained by lifting the representation-level restriction through the dictionaries via \cref{eq:metric_lift} (metric tensors here are the identity), thus guaranteeing exact spectral correspondence. 
The resulting sheaf Laplacian $\mathbf{L}_\nscoeff$ has kernel with dimension $5$, and $20$ one-dimensional eigenspaces, for $21$ eigenspaces in total. 
By \cref{cor:spectral_correspondence}, every eigenvalue of $\mathbf{L}_\nscoeff$ is also an eigenvalue of $\mathbf{L}_\ns$, and the corresponding $21$ eigenspaces are embedded isometrically into those of $\mathbf{L}_\ns$. 
The kernel of $\mathbf{L}_\ns$, however, has dimension $178$: 
only $5$ of these directions come from the lifted kernel of \nscoeff, the remaining $173$ being orthogonal to every dictionary.

A bandlimited test signal is then generated on $\nscoeff$ following the eigenspace-uniform procedure of \cref{eq:ex_signal_generation}, selecting $K_{\mathrm{sig}}=5$ eigenspaces of $\mathbf{L}_\nscoeff$ (the $5$-dimensional kernel plus four simple nonzero eigenvalues), for a $b_\mathcal{K}=9$-dimensional target band.
By \cref{cor:spectral_correspondence}, lifting the generated representation to the signal sheaf \ns via the dictionaries yields a signal \x with the same energy as \vecs on the same $K_{\mathrm{sig}}$ eigenspaces.
Diagonalizing $\mathbf{L}_\ns$ directly to isolate the band $\mathcal K$ would be misleading: 
its kernel mixes the $173$ dictionary-invisible directions with the $5$ relevant ones. 
Instead, lifting the $9$ eigenvectors of $\mathbf{L}_\nscoeff$ via $\delta$ yields, by \cref{cor:spectral_correspondence}, the relevant subspace of $\ker(\mathbf{L}_\ns)$ for the band $\mathcal{K}$: 
this $9$-dimensional lifted subspace is the target basis $\mathbf V$ used for sampling in \cref{alg:greedy_sampling}. 
Given the target band $\mathcal{K}$ with $b_\mathcal{K}=9$, our goal is to reconstruct \x from few of its \emph{own raw entries}---i.e., individual portfolio-weight measurements $\x_i[k]$, the fraction invested by $i$ in a specific stock $k\in\mathcal U_i$. 
Concretely, $\x_\mathcal{S}=\Psi_\mathcal{S}^\top\x$ according to \cref{eq:sampling_reconstruction}, where the sampling operator $\Psi_\mathcal{S}$ is obtained through \cref{alg:greedy_sampling}, whose atoms of selection are precisely these raw stock-level entries and never dictionary atoms: 
dictionaries only shape the $9$-dimensional target subspace via the lift above, not which entries of \x are queried.

\begin{figure}[t]
    \centering
    \includegraphics[width=\linewidth]{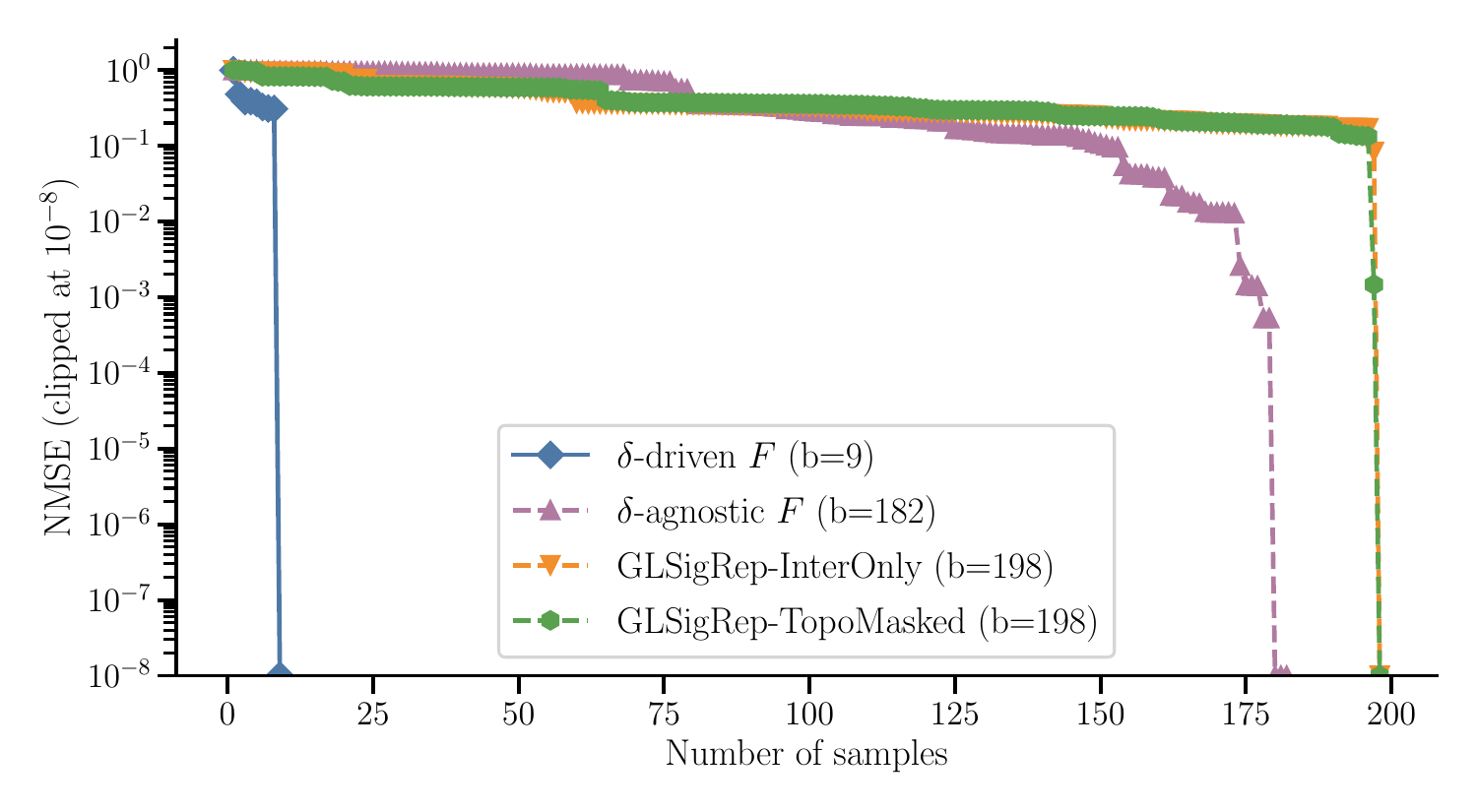}\\
    \footnotesize (a)\\[1ex]
    \includegraphics[width=\linewidth]{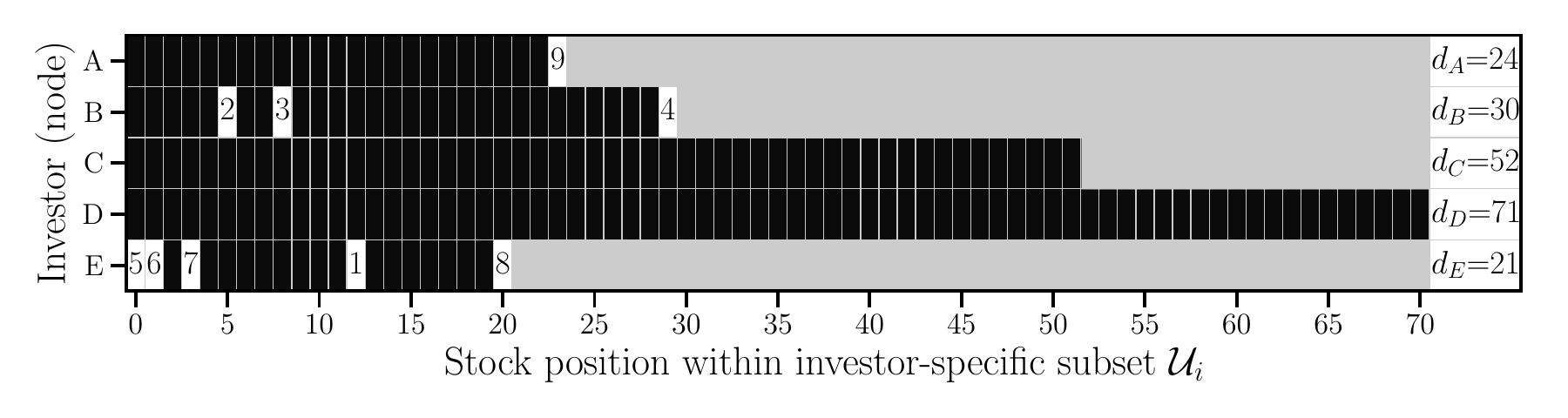}\\
    \footnotesize (b)\\[1ex]
    \includegraphics[width=\linewidth]{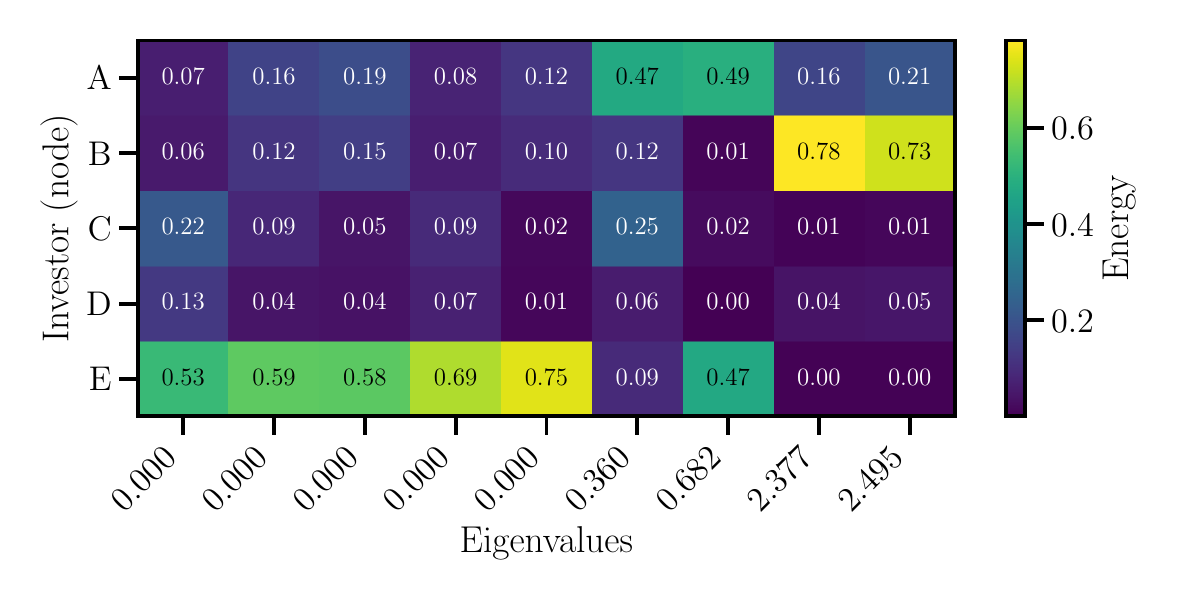}\\
    \footnotesize (c)
    \caption{(a) Reconstruction error (NMSE) vs number of selected samples for the tested methods.
    NMSE is clipped at $10^{-8}$ level for readability.
    Concerning $\delta$-driven sheaf sampling (cf. \cref{cor:parsimonious_lift}), we provide (b) the resulting greedy sample selection (white) with numbers indicating the sampling order, and (c) the energy decomposition across investors.}
    \label{fig:sampling_result}
\end{figure}

The results are collected in \cref{fig:sampling_result}(a)-(c). 
Specifically, \cref{fig:sampling_result}(a) reports the NMSE of signal recovery versus the number of samples, comparing the proposed sheaf approach against the baselines of \cref{subsec:app_filtering_synth}, each fed to the same greedy procedure but with its own target subspace and, consequently, its own minimum sampling budget. 
As we can see from \cref{fig:sampling_result}(a), $\delta$-driven reconstruction (using \cref{cor:parsimonious_lift}) is essentially exact:
the NMSE of the $198$-dimensional sheaf signal \x is on the order $10^{-29}$, consistent with \cref{thm:sheaf_sampling} once $\mathcal S$ achieves $\mathrm{rank}(\Psi_\mathcal{S}^\top\mathbf V)=b_\mathcal{K}=9$, which \cref{alg:greedy_sampling} attains with exactly $9$ raw measurements---the minimum possible by \cref{cor:minimum_samples}. 
Instead, diagonalizing $\mathbf{L}_\ns$ directly and selecting, via \cref{cor:spectral_correspondence}, the eigenspaces overlapping the band $\mathcal{B}_{\mathcal{K}}$ ($\delta$-agnostic \ns) requires $b=182$ raw measurements for exact recovery, $20\times$ more than the $\delta$-driven approach. 
The reason is that the $5$ dictionary-relevant directions from $\ker(\mathbf{L}_\nscoeff)$ are conflated inside the $178$-dimensional kernel of $\mathbf{L}_\ns$, together with the $173$ dictionary-orthogonal ones. 
Any individual eigenvector returned by diagonalizing a degenerate eigenspace is arbitrary within it; recovering the true band exactly therefore requires retaining the eigenspace in full. 
Finally, the GSP baseline, is markedly worse in both the inter-only and the topology-masked variant. 
The NMSE remains order of magnitudes above, and exact reconstruction is only attained once every signal coordinate is sampled.

\cref{fig:sampling_result}(b) illustrates the sampled stock-level positions across investors by the $\delta$-driven sheaf approach. 
As we can see from \cref{fig:sampling_result}(b), the method selects $1$ sample from $A$, $3$ from $B$, and $5$ from $E$. Conversely, $C$ and $D$ are never directly observed. This is consistent with the energy distribution for the generated signal at investor-level, illustrated in \cref{fig:sampling_result}(c), where we can see how signal energy concentrates on $A$, $B$, and $E$, which is precisely why \cref{alg:greedy_sampling} targets them. 
Finally, the signal \x is nearly locally consistent on $E-D-C$, with $\mathrm{TV}_{(C,D)}\!=\!0.089$, $\mathrm{TV}_{(E,D)}\!=\!0.017$.
Thus, the restriction maps alone tightly constrain the investments of $C$ and $D$ from their sampled neighbors, without requiring a direct measurement. 
In contrast, $\mathrm{TV}_{(A,B)}\!=\!0.993$ and $\mathrm{TV}_{(E,B)}\!=\!0.748$, justifying sampling from those investors.

%% file: sections/7-conclusions.tex

\section{Conclusions and future works}\label{sec:conclusions}

We introduced a sheaf-theoretic framework for SP on graphs with heterogeneous local signal spaces. 
By modeling node and edge data as network sheaves valued in \Hilb, the proposed framework extends GSP to settings where local signals differ in dimension, representation, and geometry. 
Building on this model, we developed a unified treatment of spectral representation, filtering, and sampling. 
A key ingredient of the framework is the use of natural transformations to relate signal and representation sheaves, enabling spectral processing in low-dimensional representation spaces while preserving the spectral structure of the original signals. 
These results establish a common foundation for SP over heterogeneous network data.

Several directions deserve further investigations. 
An important theoretical question is the interplay between local and global bandlimitedness, where signals are simultaneously compressible within each stalk and bandlimited over the sheaf. 
Furthermore, in this paper we assumed the restriction maps and the graph topology to be known a priori.
An interesting ongoing activity is how to learn both from data. 
On the learning side, a further interesting direction is task-oriented sheaf learning, where the sheaf structure is jointly optimized with downstream SP objectives, such as denoising, interpolation, or classification. 
Ultimately, extending the framework to cell complexes of arbitrary order would allow capturing multiway relations among heterogeneous signals.

%% file: apps/linear-algebra-with-metric-tensors.tex
\section{Linear Algebra with Metric Tensors}
\label{app:metric}

This appendix collects the algebraic identities involving metric tensors that are used throughout the paper. 
All results are standard in the theory of finite-dimensional inner product spaces; we gather them here for convenience and to fix notation. 
The material in this section is based on~\citeSM{SMstrang2006linear}.

\subsection{Inner Products and Induced Norms}
\label{app:inner_products}

Let $\G \in \pd^d$ be a symmetric positive definite matrix (a \emph{metric tensor}) on $\reall^d$. 
The \emph{$\G$-inner product} and its induced norm are
\begin{equation}
  \langle \x, \y \rangle_\G = \x^\top \G \y,
  \qquad
  \|\x\|_\G = \sqrt{\x^\top \G \x}, \qquad \x, \y \in \reall^d.
  \label{eq:G_inner_product}
\end{equation}
The pair $(\reall^d, \langle\cdot,\cdot\rangle_\G)$ is a finite-dimensional real Hilbert space. 
Since $\G \in \pd^d$, it admits a unique symmetric positive definite square root $\G^{1/2} \in \pd^d$ satisfying $\G^{1/2}\G^{1/2} = \G$, and
\begin{equation}
  \langle \x, \y \rangle_\G = \langle \G^{1/2}\x, \G^{1/2}\y \rangle_{\eye{}} = (\G^{1/2}\x)^\top (\G^{1/2}\y),
  \label{eq:G_inner_product_sqrt}
\end{equation}
so every $\G$-inner product reduces to the standard Euclidean inner product after the change of variables $\tilde{\x} = \G^{1/2}\x$.

For the space of 0-cochains $C^0(\graph;\ns) \cong \reall^L$, $L = \sum_i d_i$, the global metric tensor is $\mathbb{G}_0 = \mathrm{blkdiag}(\{\G_i\}_{i\in\vertexset}) \in \pd^L$, and the $\mathbb{G}_0$-inner product is
\begin{equation}
  \langle \x, \y \rangle_{C^0} = \x^\top \mathbb{G}_0 \y = \sum_{i\in\vertexset} \x_i^\top \G_i \y_i,
  \label{eq:C0_inner_product}
\end{equation}
where the second equality follows from the block-diagonal structure of $\mathbb{G}_0$. 
Similarly, for 1-cochains $C^1(\graph;\ns) \cong \reall^M$, $M = \sum_e d_e$, the global metric is $\mathbb{G}_1 = \mathrm{blkdiag}(\{\G_e\}_{e\in\edgeset})$.

\subsection{G-Adjoint of a Linear Map}
\label{app:adjoint}

Let $\mathbf{A}: (\reall^{d_0}, \langle\cdot,\cdot\rangle_{\G_0}) \to (\reall^{d_1}, \langle\cdot,\cdot\rangle_{\G_1})$ be a linear map represented by a matrix $\mathbf{A} \in \reall^{d_1 \times d_0}$.
Its \emph{$(\G_0, \G_1)$-adjoint} is the unique linear map $\mathbf{A}^\star: (\reall^{d_1}, \langle\cdot,\cdot\rangle_{\G_1}) \to (\reall^{d_0}, \langle\cdot,\cdot\rangle_{\G_0})$ satisfying
\begin{equation}
  \langle \mathbf{A}\x, \y \rangle_{\G_1} = \langle \x, \mathbf{A}^\star\y \rangle_{\G_0}
  \quad \forall\, \x \in \reall^{d_0},\, \y \in \reall^{d_1}.
  \label{eq:adjoint_def}
\end{equation}
Expanding both sides using~\cref{eq:G_inner_product}, $(\mathbf{A}\x)^\top\G_1\y = \x^\top{\mathbf{A}^\star}^\top\G_0\y$ for all $\x,\y$, which gives
\begin{equation}
  \mathbf{A}^\star = \G_0^{-1}\mathbf{A}^\top\G_1.
  \label{eq:adjoint_formula}
\end{equation}
When $\G_0 = \eye{d_0}$ and $\G_1 = \eye{d_1}$, \cref{eq:adjoint_formula} reduces to the standard matrix transpose $\mathbf{A}^\star = \mathbf{A}^\top$. 
The sheaf coboundary adjoint $\B^\star = (\mathbb{G}_0)^{-1}\B^\top\mathbb{G}_1$ is the instance of \cref{eq:adjoint_formula} with $\mathbf{A} = \B$, $\G_0 = \mathbb{G}_0$, and $\G_1 = \mathbb{G}_1$.

A matrix $\mathbf{A} \in \reall^{d \times d}$ is \emph{$\G$-symmetric} (or $\G$-selfadjoint) if $\mathbf{A}^\star = \mathbf{A}$, i.e.,
\begin{equation}
  \G\mathbf{A} = \mathbf{A}^\top\G,
  \label{eq:G_symmetric}
\end{equation}
or equivalently, $\G^{1/2}\mathbf{A}\G^{-1/2}$ is symmetric in the standard sense. 
The sheaf Laplacian $\mathbf{L}_\ns = \B^\star\B$ is $\mathbb{G}_0$-symmetric by construction.

\subsection{G-Orthogonal Projectors}
\label{app:projectors}

A matrix $\mathbf{M} \in \reall^{d \times d}$ is a \emph{$\G$-orthogonal projector} onto a subspace $\mathcal{V} \subseteq \reall^d$ if:
\begin{enumerate}[label=(\roman*)]
  \item \emph{Idempotence}: $\mathbf{M}^2 = \mathbf{M}$;
  \item \emph{$\G$-symmetry}: $\G\mathbf{M} = \mathbf{M}^\top\G$.
\end{enumerate}
These two conditions together imply that $\mathbf{M}$ projects onto $\mathcal{V} = \Im{\M}$ along the $\G$-orthogonal complement $\mathcal{V}^{\perp_\G} = \{\x : \langle\x,\vecv\rangle_\G = 0\, \forall\, \vecv\in\mathcal{V}\}$.

\spara{Construction.} 
Let $\V \in \reall^{d \times r}$ be a matrix whose columns form a $\G$-orthonormal basis for $\mathcal{V}$, i.e., $\V^\top\G\V = \eye{r}$. 
Then
\begin{equation}
  \mathbf{M} = \V\V^\top\G
  \label{eq:G_projector}
\end{equation}
is the $\G$-orthogonal projector onto $\mathrm{span}(\V)$.

Regarding idempotence: $\mathbf{M}^2 = \V\V^\top\G\V\V^\top\G =
  \V(\V^\top\G\V)\V^\top\G = \V\eye{r}\V^\top\G = \mathbf{M}$.

$\G$-symmetry is also verified as follows: 
$\G\mathbf{M} = \G\V\V^\top\G$ and $\mathbf{M}^\top\G = (\V\V^\top\G)^\top\G = \G\V\V^\top\G$.

%% file: apps/proofs.tex
\section{Proofs}
\label{app:proofs}

\begin{proof}[Proof of Prop.~III.1]
    If
    \begin{equation}\label{eq:supp_local_representation_commutativity}
    \restrictionmap{i}{e}\D_i = \D_e\restrictionmapscoeff{i}{e}, \qquad \restrictionmap{j}{e}\D_j = \D_e\restrictionmapscoeff{j}{e} \end{equation}
    then the columns of $\restrictionmap{i}{e}\D_i$ and $\restrictionmap{j}{e}\D_j$ belong to $\mathrm{Im}(\D_e)$.
    Hence,
    \begin{equation}\label{Im(Ae)}
        \mathrm{Im}(\A_e)\subseteq \mathrm{Im}(\D_e).
    \end{equation}
    Since $\D_e\in\stiefel{d_e}{c_e}$, we have $\dim\mathrm{Im}(\D_e)=c_e$, and therefore $\mathrm{rank}(\A_e)\leq c_e$.
    Conversely, if $\mathrm{rank}(\A_e)\leq c_e$, then $\mathrm{Im}(\A_e)$ can be embedded in a $c_e$-dimensional subspace of $\ns(e)$.
    Choosing $\D_e$ as any orthonormal basis of such a subspace, the columns of $\restrictionmap{i}{e}\D_i$ and $\restrictionmap{j}{e}\D_j$ belong to $\mathrm{Im}(\D_e)$, so there exist matrices $\restrictionmapscoeff{i}{e}$ and $\restrictionmapscoeff{j}{e}$ satisfying \eqref{eq:supp_local_representation_commutativity}.
\end{proof}

\begin{proof}[Proof of Prop.~III.2]
Since $\mathrm{Im}(\A_e)\subseteq \mathrm{Im}(\D_e)$ by construction, and
$\D_e$ has orthonormal columns, we have
\begin{equation}
\D_e\D_e^\top\A_e=\A_e\,.
\end{equation}
Using
\begin{equation}
\restrictionmapscoeff{i}{e}=\D_e^\top\restrictionmap{i}{e}\D_i,
\qquad
\restrictionmapscoeff{j}{e}=\D_e^\top\restrictionmap{j}{e}\D_j\,;    
\end{equation}
we have
\begin{equation}
\D_e\restrictionmapscoeff{i}{e}=\D_e\D_e^\top\restrictionmap{i}{e}\D_i=\restrictionmap{i}{e}\D_i\,.
\end{equation}
The same argument gives
\begin{equation}
\D_e\restrictionmapscoeff{j}{e}=\restrictionmap{j}{e}\D_j\,.
\end{equation}
Therefore, both commutativity conditions in \cref{eq:supp_local_representation_commutativity} hold.
\end{proof}

\begin{proof}[Proof of Prop.~III.3]
Let $\vecs\in\gsspace{\graph}{\nscoeff}$.
For every edge $e=(i,j)$,
\begin{equation}
\restrictionmapscoeff{i}{e}\vecs_i=\restrictionmapscoeff{j}{e}\vecs_j\,.
\end{equation}
Using the naturality conditions,
\begin{equation}
\restrictionmap{i}{e}\D_i=\D_e\restrictionmapscoeff{i}{e},
\qquad
\restrictionmap{j}{e}\D_j=\D_e\restrictionmapscoeff{j}{e}\,,
\end{equation}
we obtain
\begin{equation}
\restrictionmap{i}{e}\x_i=\D_e\restrictionmapscoeff{i}{e}\vecs_i=\D_e\restrictionmapscoeff{j}{e}\vecs_j=\restrictionmap{j}{e}\x_j\,.
\end{equation}
Hence, $\x\in\gsspace{\graph}{\ns}$.

Conversely, let $\x\in\gsspace{\graph}{\ns}$ with $\x_i\in\im{\D_i}$ for every node.
Since $\D_i\in\stiefel{d_i}{c_i}$, the representations are uniquely recovered as
\begin{equation}
\vecs_i=\D_i^\top\x_i\,.
\end{equation}
For every edge $e=(i,j)$, naturality together with $\x\in\gsspace{\graph}{\ns}$ gives
\begin{equation}
\D_e\restrictionmapscoeff{i}{e}\vecs_i=\restrictionmap{i}{e}\x_i=\restrictionmap{j}{e}\x_j=\D_e\restrictionmapscoeff{j}{e}\vecs_j\,,
\end{equation}
and since $\D_e$ has orthonormal, hence injective, columns, it can be cancelled on the left, yielding $\restrictionmapscoeff{i}{e}\vecs_i=\restrictionmapscoeff{j}{e}\vecs_j$, i.e., $\vecs\in\gsspace{\graph}{\nscoeff}$.
Therefore, $\mathbb D_0$ is a linear isomorphism between $\gsspace{\graph}{\nscoeff}$ and $\gsspace{\graph}{\ns}\cap\im{\mathbb D_0}$.
\end{proof}

\begin{proof}[Proof of Theorem~IV.1]
Let
\begin{equation}
\mathbb{D}_0 \coloneqq \mathrm{blkdiag}(\{\D_i\}_{i\in\vertexset}),
\quad
\mathbb{D}_1 \coloneqq \mathrm{blkdiag}(\{\D_e\}_{e\in\edgeset})\,.
\end{equation}
The natural transformation condition gives, at the coboundary level,
\begin{equation}
\label{eq:coboundary_intertwining}
\B_{\ns}\mathbb{D}_0 = \mathbb{D}_1\B_\nscoeff\,.
\end{equation}
Indeed, for every edge $e=(i,j)$,
\begin{equation}
\begin{aligned}
(\B_{\ns}\mathbb{D}_0\vecs)_e &= \restrictionmap{i}{e}\D_i\vecs_i-\restrictionmap{j}{e}\D_j\vecs_j\\
&=\D_e\left(\restrictionmapscoeff{i}{e}\vecs_i-\restrictionmapscoeff{j}{e}\vecs_j\right)\\
&=(\mathbb{D}_1\B_\nscoeff\vecs)_e \,.
\end{aligned}
\end{equation}
We now show that the corresponding adjoints also intertwine.
By the metric-compatible lifting assumption,
\begin{equation}
\restrictionmap{i}{e}=\D_e\restrictionmapscoeff{i}{e}\Hmet_i^{-1}\D_i^\top\G_i \,,
\end{equation}
and, since $\G_i,\Hmet_i$ are symmetric, transposing gives
\begin{equation}
\restrictionmap{i}{e}^\top=\G_i\D_i\Hmet_i^{-1}\left(\restrictionmapscoeff{i}{e}\right)^\top\D_e^\top \,.
\end{equation}
Using the adjoint formulas
\begin{equation}
\restrictionmap{i}{e}^{\star}=\G_i^{-1}\restrictionmap{i}{e}^{T}\G_e,
\qquad
\left(\restrictionmapscoeff{i}{e}\right)^{\star}=\Hmet_i^{-1}\left(\restrictionmapscoeff{i}{e}\right)^\top\Hmet_e \,,
\end{equation}
together with the induced-metric identity $\Hmet_e=\D_e^\top\G_e\D_e$, we obtain
\begin{equation}
\begin{aligned}
\restrictionmap{i}{e}^{\star}\D_e &=\G_i^{-1}\restrictionmap{i}{e}^{T}\G_e\D_e \\
&=\D_i\Hmet_i^{-1}\left(\restrictionmapscoeff{i}{e}\right)^\top\underbrace{\D_e^\top\G_e\D_e}_{=\,\Hmet_e}
=
\D_i\left(\restrictionmapscoeff{i}{e}\right)^{\star}\,.
\end{aligned}
\end{equation}
Therefore, at the global level,
\begin{equation}
\label{eq:adjoint_intertwining}
\B_{\ns}^{\star}\mathbb{D}_1=\mathbb{D}_0\B_\nscoeff^{\star}\,.
\end{equation}

Using \eqref{eq:coboundary_intertwining} and \eqref{eq:adjoint_intertwining}, we obtain the result $\forall\,\vecs\in C^0(\graph;\nscoeff)$
\begin{equation}
\Lsh\mathbb{D}_0=\B_{\ns}^{\star}\B_{\ns}\mathbb{D}_0=\B_{\ns}^{\star}\mathbb{D}_1\B_\nscoeff=\mathbb{D}_0\B_\nscoeff^{\star}\B_\nscoeff=\mathbb{D}_0\Lshrep\,.
\end{equation}
\end{proof}

\begin{proof}[Proof of Cor.~IV.2]
If $\Lshrep\vecv=\mu\vecv$, then
\begin{equation}
\Lsh\,\mathbb D_0\vecv=\mathbb D_0(\Lshrep\vecv)=\mu\,\mathbb D_0\vecv\,,
\end{equation}
so $\mathbb D_0\vecv$ is an eigenvector of $\Lsh$ associated with the same eigenvalue, provided $\mathbb D_0\vecv\neq \zeros$.

Finally, using the definition $\Hmet_i=\D_i^\top\G_i\D_i$, we obtain
\begin{equation}
\begin{aligned}
\Eprod{\mathbb D_0\vecv}{\mathbb D_0\w}{\mathcal{C}^{0}(\graph;\ns)}&=\sum_{i\in\vertexset}(\D_i\vecv_i)^\top\G_i(\D_i\w_i)\\
&=
\Eprod{\vecv}{\w}{\mathcal{C}^{0}(\graph;\nscoeff)}\,.
\end{aligned}
\end{equation}
Thus $\mathbb D_0$ is an isometric embedding between the corresponding $0$-cochain Hilbert spaces.
\end{proof}

\begin{proof}[Proof of Cor.~V.1]
Since
\begin{equation}
\mathbf{L}_{\ns}\mathbb D_0 = \mathbb D_0\mathbf{L}_\nscoeff\,,
\end{equation}
it follows by induction that
\begin{equation}
\mathbf{L}_{\ns}^{\,q}\mathbb D_0 = \mathbb D_0\mathbf{L}_\nscoeff^{\,q},\qquad q\ge0.
\end{equation}
Multiplying by the coefficients $a_q$ and summing over $q=0,\ldots,Q$ yields
\begin{equation}
    h(\mathbf{L}_{\ns})\,\mathbb D_0=\mathbb D_0\,h(\mathbf{L}_\nscoeff)\,.    
\end{equation}
\end{proof}

\begin{proof}[Proof of Thm.~VI.1]

Since $\x\in\mathcal B_{\mathcal K}$, there exists a unique vector $\boldsymbol{\alpha}\in\mathbb R^{b_{\mathcal K}}$ such that
\begin{equation}
\x=\mathbf V_{\mathcal K}\boldsymbol{\alpha}.
\end{equation}
Hence,
\begin{equation}
\x_{\mathcal S}=\boldsymbol{\Psi}_{\mathcal S}\mathbf V_{\mathcal K}\boldsymbol{\alpha}.
\end{equation}

The coefficient vector $\boldsymbol{\alpha}$ is uniquely determined if and only if the matrix $\boldsymbol{\Psi}_{\mathcal S}\mathbf V_{\mathcal K}$ has full column rank, proving 
\begin{equation}
\label{eq:sampling_rankSM}
\operatorname{rank}\!\left(\boldsymbol{\Psi}_{\mathcal S} \mathbf V_{\mathcal K} \right)=b_{\mathcal K},
\end{equation}

Since $\mathcal B_{\mathcal K} = \operatorname{Im}(\mathbf V_{\mathcal K})$, condition \eqref{eq:sampling_rankSM} is equivalent to requiring that no nonzero bandlimited signal belongs to $\ker(\boldsymbol{\Psi}_{\mathcal S})$, yielding $\mathcal B_{\mathcal K}\cap\ker(\boldsymbol{\Psi}_{\mathcal S})=\{\mathbf0\}.$.

Finally, when \eqref{eq:sampling_rankSM} holds,
\begin{equation}
\widehat{\boldsymbol{\alpha}}=\left(\boldsymbol{\Psi}_{\mathcal S}\mathbf V_{\mathcal K}\right)^{\dagger}\x_{\mathcal S},
\end{equation}
and substituting this expression into $\x=\mathbf V_{\mathcal K}\boldsymbol{\alpha}$ gives
\begin{equation}
\label{eq:sampling_reconstructionSM}
\widehat{\x}=\mathbf V_{\mathcal K}\left(\boldsymbol{\Psi}_{\mathcal S}\mathbf V_{\mathcal K}\right)^{\dagger}\x_{\mathcal S}.
\end{equation}
\end{proof}

\begin{proof}[Proof of Cor.~VI.2]
Perfect recovery requires
\begin{equation}
\operatorname{rank}\!\left(\boldsymbol{\Psi}_{\mathcal S}\mathbf V_{\mathcal K}\right)=b_{\mathcal K}\,.
\end{equation}
Since $\boldsymbol{\Psi}_{\mathcal S}\mathbf V_{\mathcal K}$ has $|\mathcal S|$ rows, its rank cannot exceed $|\mathcal S|$.
Therefore, $|\mathcal S|\geq b_{\mathcal K}$.
\end{proof}

\begin{proof}[Proof of Cor.~VI.3]
Since $\mathbf L_\ns\mathbb D_0(\vecv)=\mathbb D_0(\mathbf L_\nscoeff\vecv)\in\im{\mathbb D_0}$ for every $\vecv$, the subspace $\im{\mathbb D_0}$ is $\mathbf L_\ns$-invariant.
Since $\mathbf L_\ns$ is $\mathbb G_0$-self-adjoint, its $\mathbb G_0$-orthogonal complement is invariant as well.
Hence $\boldsymbol\Pi_\delta$ commutes with $\mathbf L_\ns$, with every spectral projector $\mathbf P_k$, and therefore with $\mathbf B_\mathcal K$, giving the stated splitting.
For the explicit form, $\mathbb D_0$ is injective and maps $\mathcal E_k$ isometrically into $\mathcal F_k$, so $m'_k\le m_k$.
The $\mathbf U_k$ bases are mutually $\mathbb H_0$-orthogonal, corresponding to eigenspaces of $\mathbf L_\nscoeff$ for distinct eigenvalues, whence $\mathrm{Im}(\mathbf B_\mathcal K^\parallel)=\mathrm{Im}(\mathbb D_0\mathbf U_\mathcal K)$.
\end{proof}

%% file: apps/spectral-multiplicity.tex
\section{Multiplicity of the spectrum}\label{app:spectral-multiplicity}

Unlike graph Laplacians, whose eigenvalues are generically simple, the spectrum of a sheaf Laplacian often exhibits eigenvalues with multiplicity greater than one. 
This phenomenon is not accidental, but reflects the geometric structure encoded by the restriction maps. 
Whenever the latter transport information coherently across the sheaf, the associated eigenspaces acquire additional degrees of freedom, leading to repeated eigenvalues. 
In this subsection we illustrate two representative mechanisms generating spectral multiplicity.

\begin{example}[Hierarchical Stiefel embeddings]
\label{ex:stiefel_multiplicity}

Suppose that the restriction maps $\restrictionmapscoeff{i}{e}$ are Stiefel matrices belonging to $\stiefel{c_e}{c_i}$, acting as isometric embeddings ($c_i\leq c_e$).
Each edge can then be naturally oriented from the lower-dimensional stalk towards the higher-dimensional one.
Assume furthermore that every node is reachable through an oriented path starting from a node $N$ whose stalk has minimum dimension $c_N=h$.
Then
\begin{equation}
    \dim\!\bigl(\ker(\mathbf{L}_{\nsrep})\bigr)=h,
\end{equation}
and every global section is uniquely determined by a single $h$-dimensional vector assigned to the stalk at node $N$, which is propagated consistently throughout the sheaf via the Stiefel restriction maps~\citeSM{SMdacunto2026networkscausalabstractionssheaftheoretic}.

Consequently, whenever $h>1$, the zero eigenvalue has multiplicity $h$. 
Any orthonormal basis returned by an eigensolver is simply one among infinitely many orthonormal bases spanning the same null space, and no individual basis vector carries an intrinsic meaning.
\end{example}

\begin{example}[Connection graph]
\label{ex:connection_graph_multiplicity}

The previous example concerns only the null eigenspace.
A much stronger form of multiplicity arises in the homogeneous case $c_i=c_e=c$, where the restriction maps are special orthogonal matrices $\restrictionmapscoeff{i}{e}=\myO_{ij}\in\specialO{c}$ defining a connection graph.
If the connection is consistent, namely the transport maps compose to the identity around every cycle of $\graph$, there exist orthogonal matrices $\{\mathbf{O}_i\}_{i\in\vertexset}$ such that
\begin{equation}
    \myO_{ij}=\mathbf{O}_i^\top\mathbf{O}_j\,,    
\end{equation}
and the sheaf Laplacian admits the factorization
\begin{equation}
\mathbf{L}_{\nsrep} = \bdO^\top \bigl( \mathbf{L}_{\graph}\otimes\eye{c}\bigr)\bdO\,,
\end{equation}
where
$\bdO=\mathrm{blkdiag}(\mathbf{O}_1,\ldots,\mathbf{O}_N)$ and $\mathbf{L}_{\graph}$ denotes the scalar graph Laplacian \citeSM{SMchung2014ranking}.

In this case, every eigenvalue of the graph Laplacian is repeated exactly $c$ times.
The graph topology and the stalk geometry completely decouple, and each graph frequency generates a $c$-dimensional eigenspace of the sheaf Laplacian.
\end{example}

\cref{ex:stiefel_multiplicity,ex:connection_graph_multiplicity} illustrate two different mechanisms leading to spectral multiplicity.
In \cref{ex:stiefel_multiplicity}, multiplicity originates from the hierarchical embedding structure of the restriction maps and affects only the null space of the sheaf Laplacian.
In \cref{ex:connection_graph_multiplicity}, it results from a geometric symmetry that propagates throughout the entire spectrum.
More generally, these examples suggest that spectral multiplicity is an intrinsic consequence of the geometric structure induced by the restriction maps.